\documentclass[11pt, reqno]{amsart}
\usepackage[a4paper,hmargin=3cm,vmargin=3.5cm,heightrounded]{geometry} 

\usepackage{amsmath, amsthm, amscd, amsfonts, amssymb, graphicx, color, mathrsfs}
\usepackage{mathscinet}

\usepackage[utf8]{inputenc}
\usepackage{dsfont}
\usepackage{thmtools}
\usepackage{xcolor}
\usepackage{yfonts}
\usepackage{setspace}
\usepackage{chngcntr}
\usepackage{eurosym}
\usepackage{enumitem}
\usepackage[english]{babel}
\usepackage[colorlinks=true,linkcolor=blue]{hyperref}
\usepackage[all]{hypcap}
\usepackage[normalem]{ulem}

\newtheorem{theorem}{Theorem}[section]
\theoremstyle{definition}

\newtheorem{lemma}[theorem]{Lemma}
\newtheorem{example}[theorem]{Example}
\newtheorem{proposition}[theorem]{Proposition}

\newtheorem{remark}[theorem]{Remark}

\newcommand{\al}{\alpha}
\newcommand{\be}{\beta}
\newcommand{\de}{\delta}
\newcommand{\ep}{\varepsilon}
\newcommand{\ph}{\varphi}

\newcommand{\om}{\omega}

\newcommand{\si}{\sigma}

\newcommand{\la}{\lambda}

\newcommand{\Om}{\Omega}

\newcommand{\trace}{\operatorname{tr}}

\newcommand{\id}{\operatorname{id}}
\renewcommand{\d}{\text{\rm d}}

\newcommand{\Cpl}{\operatorname{Cpl}}
\newcommand{\mart}{\operatorname{Mart}}
\newcommand{\psd}{\operatorname{Sym}^d_+}
\newcommand{\sym}{\operatorname{Sym}^d}

\newcommand{\Cnt}{\operatorname{C}}

\newcommand{\Lip}{\operatorname{Lip}}

\newcommand{\cmp}{{\rm c}}
\newcommand{\bdd}{{\rm b}}

\newcommand{\Cb}{\Cnt_\bdd}
\newcommand{\Cc}{\Cnt_\cmp}
\newcommand{\Cbinf}{\Cnt_\bdd^\infty}
\newcommand{\Ccinf}{\Cnt_\cmp^\infty}
\newcommand{\Lipb}{\Lip_\bdd}

\newcommand{\E}{\mathbb{E}}
\newcommand{\N}{\mathbb{N}}
\renewcommand{\P}{\mathbb{P}}

\newcommand{\R}{\mathbb{R}}

\newcommand{\Bc}{\mathcal{B}}

\newcommand{\Pc}{\mathcal{P}}

\newcommand{\Wc}{\mathcal{W}}

\renewcommand{\epsilon}{\varepsilon}
\renewcommand{\rho}{\varrho}
\renewcommand{\phi}{\varphi}

\allowdisplaybreaks
\numberwithin{equation}{section}

\begin{document}

\title[Optimal transport and dynamic risk measures]{An optimal transport foundation for a class of dynamically consistent risk measures}

\author{Sven Fuhrmann}
\address{Department of Mathematics and Statistics, University of Konstanz}
\email{sven.fuhrmann@uni-konstanz.de}

\author{Michael Kupper}
\address{Department of Mathematics and Statistics, University of Konstanz}
\email{kupper@uni-konstanz.de}

\author{Max Nendel}
\address{Department of Statistics and Actuarial Science, University of Waterloo}
\email{mnendel@uwaterloo.ca}

\thanks{This paper is dedicated to Robert Denk on the occasion of his 60th birthday.\ We thank Robert for many inspiring discussions, his guidance, and his friendship.\ We thank Jonas Blessing for valuable comments and discussions related to this work.\ Financial support through the Deutsche Forschungsgemeinschaft (DFG, German Research Foundation) -- SFB 1283/2 2021 -- 317210226 is gratefully acknowledged.}

\date{\today}
\begin{abstract}
We study a class of dynamically consistent risk measures that robustify a time-homogeneous Markovian reference model by allowing for distributional uncertainty in its transition laws.\ We start from one-step convex risk evaluations in which ambiguity is captured by penalized worst-case expectations over alternative transition laws.\ Imposing time consistency then yields a convex monotone semigroup on bounded continuous payoff functions, and this semigroup represents the associated dynamic risk measure.\ The semigroup is uniquely characterized by its risk generator.\ Under a lower bound on the family of penalties in terms of suitable optimal transport costs relative to the reference laws, we identify the generator on smooth test functions.\ For optimal transport bounds with linear small-time scaling, this produces a first-order, drift-type correction given by a convex Hamiltonian acting on the gradient.\ Under martingale transport constraints and a different scaling, however, the leading correction is genuinely of second order and is described by a convex monotone functional acting on the Hessian.\ We illustrate both regimes for Wasserstein and martingale Wasserstein penalizations and derive explicit formulas via convex conjugates of the underlying transport costs.\ The associated dynamic risk measures admit stochastic control representations in which the control acts on the drift in the first-order case and on the volatility in the second-order case.\\[1mm]
\noindent\emph{Key words:} dynamic risk measure, time consistency, convex monotone semigroup, risk generator, optimal transport, Wasserstein distance, martingale optimal transport.\\[1mm]
\noindent\emph{AMS 2020 Subject Classification:} Primary 60J35, 49Q22; Secondary 47H20, 91G70.
\end{abstract}

\maketitle

\section{Introduction}
The quantification of risk is a central topic in financial mathematics.
Since probabilistic models are inherently subject to uncertainty, risk assessments based on a single reference probability measure may fail to capture imprecisions of the underlying model.
This has motivated the development of coherent and convex risk measures, cf.~\cite{arztner1999coherent,follmer2002convex,frittelli2002puttingorder}.
A common way to incorporate model uncertainty is to evaluate risk as a penalized worst-case expectation over probability measures in a neighbourhood of a reference model.

In this paper, we work with a Markovian reference model of the form $X_t^x:=\psi_t(x)+Y_t$, where $\psi_t(x)$ is a deterministic function of the current state $x\in\R^d$ and $Y_t$ has law $\mu_t$.
For a payoff $f(X_t^x)$ at time $t>0$, we consider the \emph{static} convex risk evaluation
\[
\rho^{\mathrm{stat}}\!\big(f(X_t^x)\big)
:= \sup_{\nu}\left(\int_{\mathbb{R}^d} f\big(\psi_t(x)+z\big)\,\nu(\d z)-\alpha_t(\nu)\right)
=:(I_t f)(x),
\]
where the supremum is taken over all probability measures on $\R^d$ and $\alpha_t$ penalizes deviations from the reference transition law $\mu_t$.\
Typical examples include divergence-based penalties, cf.\ \cite{FoellmerSchied2016,zbMATH06653590}, as well as transport-type penalties, cf.\ \cite{BartlEcksteinKupper2021,Foellmer2022,FuhrmannKupperNendel2023,KupperNenedelSgarabottolo2025,NendelSgarabottolo2022}.
These constructions are closely related to distributionally robust optimization, where ambiguity sets are modeled as transport balls and lead to tractable dual reformulations and sensitivity results, see, e.g.,~\cite{bartl2020robust,bartl2023sensitivity,BartlWiesel2023,blanchet2019quantifying,gao2016distributionally,esfahani2018data,nendel2025scaling}.

A central structural requirement in multi-period risk measurement is dynamic consistency.
In a time-homogeneous Markovian setting, this property takes the form of a tower principle:\ risk over a horizon $s+t$ is obtained by first evaluating the conditional risk over the remaining time interval of length $t$, given the information up to time $s$, and then applying the same risk evaluation once more.
Since the conditional distribution of the future increment depends on the past only through the state $X_s^x$, the conditional risk is again described by the one-step operator $I_t$, i.e., $\rho^{\mathrm{multi}}\!\big(f(X^x_{s+t})\big)=(I_s I_t f)(x)$.\ Iterating this argument along a partition of $[0,t]$ into $n$ subintervals of length $t/n$ and evaluating risk by backward recursion yields the $n$-step representation
\[
\rho^{\mathrm{multi}}\!\big(f(X_t^x)\big)=\big(I_{t/n}^n f\big)(x),
\]
showing that multi-period risk is given by the $n$-fold composition of the one-step operators.
This composition principle is consistent with the general theory of dynamic risk measures in discrete time, see, e.g.,~\cite{artzner2002coherentmultiperiod,bionnadal2006dynamic,cheridito2006dynamic,CheriditoKupper2011,follmer2006convexrm,riedel2004dynamic}.

The aim of this paper is to obtain an infinitesimal characterization of time-consistent
robust risk evaluations in continuous time. To this end, we start from the family of static one-step
operators $(I_t)_{t>0}$ and study the continuous-time limit of the time-consistent $n$-fold
compositions $I_{t/n}^n$ as $n\to\infty$.\ This limit connects the small-time behaviour of the robust one-step evaluations to a convex monotone
semigroup and leads to explicit formulas for  the infinitesimal generator of the semigroup. We thus define the associated dynamic risk measure by
\[
\rho^{\mathrm{dyn}}\!\big(f(X_t^x)\big)
:=\left(\lim_{n\to\infty} I_{t/n}^n f\right)(x)
=:\big(S_t f\big)(x),
\]
whenever the limit exists.\ In the linear case, such limits are classical and covered by Chernoff’s product formula, cf.\ \cite{Chernoff68,Chernoff74}.
Here, $(I_t)_{t>0}$ is convex and monotone on $C_b(\R^d)$. The limit family $(S_t)_{t\ge0}$ forms a {convex monotone semigroup}, i.e., a nonlinear analogue of a (sub-)Markovian semigroup.\ Such semigroups frequently appear in the context of stochastic optimal control and Hamilton-Jacobi-Bellman equations, cf.\ \cite[Section II.3]{zbMATH05032360}.\
 A key subtlety compared to the linear theory is that the domain of the infinitesimal generator is typically not invariant under the related convex monotone semigroup.\
Recent work has therefore developed a semigroup theory tailored to convex monotone operators, cf.~\cite{BlessingDenkKupperNendel2025,BlessingKupperNendel2025}.\ It relies on compactness in the mixed topology and comparison principles based on the notion of a $\Gamma$-generator, which is an extension of the generator to an appropriate Lipschitz set, which in the linear case is commonly known as the Favard space.\ Within this framework, Chernoff-type approximation theorems provide verifiable conditions for the convergence of discretization schemes and yield an infinitesimal characterization of the limiting dynamics, cf.~\cite{BlessingDenkKupperNendel2025,BlessingKupper2023,BlessingKupper2025,BlessingKupperNendel2025}.\ In particular, \cite{BlessingDenkKupperNendel2025} implies that $(S_t)_{t\ge0}$ is uniquely determined by its generator on smooth test functions, and the generator agrees with the right-derivative at zero of the one-step evaluations, i.e.,
\[
Af=\lim_{h\downarrow0}\frac{S_h f-f}{h}=\lim_{h\downarrow0}\frac{I_h f-f}{h}=:I'(0)f,
\]
whenever the limit on the right-hand side exists.\
Hence, the associated dynamic risk measure is fully determined by the {risk generator} $I'(0)$.

\subsection{Main results.}
We establish minimal conditions on the  family of penalizations $(\alpha_t)_{t>0}$ under which the risk generator exists on smooth test functions and admits an explicit representation.\ This yields an infinitesimal characterization of a broad class of time-homogeneous dynamically consistent risk measures whose penalties are bounded from below by optimal transport costs.
 Concretely, we assume that for each $t>0$ and each probability measure $\nu$,
\[
\alpha_t(\nu)\ge \inf_{\pi\in\Cpl(\mu_t,\nu)}\int_{\R^d\times \R^d} t\, c\bigg(\frac{|z-y|}{t}\bigg)\,\pi(\d y,\d z),
\]
for some nondecreasing function $c\colon[0,\infty)\to\R$ satisfying $\lim_{v\to\infty}c(v)/v=\infty$. Our first main result, Theorem~\ref{thm.generator}, shows that, under this linear transport scaling, the limiting generator is a first-order perturbation
\[
Af = Lf + g(\nabla f),
\]
where $L$ is the generator of the Markovian reference dynamics $\psi_t(x)+Y_t$ and $g$ is a convex function capturing the maximal penalized small-time directional displacement of the reference law.\ The existence and explicit identification of $L$ build on \cite{nendel2025chernoff}, which develops a Chernoff--Mehler product approximation for models of the form $X_t^x=\psi_t(x)+Y_t$. Under verifiable regularity and tightness assumptions, it establishes convergence of the associated products to the transition semigroup of a L\'evy process with drift and identifies the limiting generator from the small-time behaviour of the approximating laws. In particular, the framework accommodates deterministic components $\psi_t$ given by flows of Lipschitz ODEs, as well as Euler and Runge--Kutta approximations thereof.\
The nonlinear correction is described through the functions
\[
g_t(m)
:=\sup_{\nu}
\bigg(\int_{\R^d\times\R^d}\langle m,z-y\rangle\,\pi_{t,\nu}^*(\d y,\d z)-\alpha_t(\nu)\bigg),
\]
where the supremum is taken over all probability measures $\nu$ on $\R^d$ and $\pi_{t,\nu}^*$ is a suitably chosen coupling of $\mu_t$ and $\nu$, which
are shown to satisfy $g_t(m)/t\to g(m)$ as $t\downarrow0$, where $g$ is convex and controlled by the conjugate cost $c^\ast$. Heuristically, this reflects that the penalty acts on deviations of the one-step reference law $\mu_t$, so the leading nonlinear contribution is driven by admissible displacements $z-y$ of stochastic increments, while the state dynamics $\psi_t(x)$ enters through the linear part $L$.

Our second main result, Theorem~\ref{thm.generator2}, treats martingale transport penalties.\ Here, deviations from $\mu_t$ are again controlled from below by transport costs, but with a different small-time scaling under a martingale constraint. In this regime, the leading correction corresponds to modifications of the volatility rather than the drift, and the risk generator becomes a second-order perturbation
\[
Af \;=\; Lf + G(\nabla^2 f),
\]
where $G$ is a convex monotone functional on the vector space of symmetric matrices, endowed with the Loewner order.\ The functional $G$ quantifies the maximal penalized modification of local covariance compatible with the martingale constraint and, in many examples, admits an explicit representation in terms of the convex conjugate of the underlying cost. We illustrate both regimes through Wasserstein-type and martingale Wasserstein-type penalizations.\ This links transport-based local model uncertainty to Hamilton--Jacobi-type equations in the first-order case and to fully nonlinear parabolic (integro)-differential equations in the second-order case.

The final part of the paper passes from these generator formulae to the continuous-time dynamic risk measures. Section~\ref{sec.chernoff} identifies the Chernoff limit of the iterated evaluations $I_{t/n}^n$ in terms of stochastic control problems. In the first-order case, the resulting semigroup has the representation
\[
(S_tf)(x)
=
\sup_{\beta}
\E\bigg[
f\big(X_t^{x,\beta}\big)
-\int_0^t g^*(\beta_s)\,\d s
\bigg],
\]
where $g^*$ is the convex conjugate of $g$ and $X^{x,\beta}$ denotes the controlled reference dynamics with additional drift control $\beta$.\ We point out that, in the Brownian case, this is the Markovian control representation associated with convex $g$-expectations.

In the second-order martingale case, the martingale constraint removes first-order displacements, and the corresponding semigroup is represented by
\[
(S_tf)(x)
=
\sup_\si
\E\bigg[
f\big(X_t^{x,\si}\big)
-\int_0^t G^*(\si_s)\,\d s
\bigg],
\]
where $G^*$ is the convex conjugate of $G$ with respect to the trace pairing and $X^{x,\si}$ denotes the controlled reference dynamics with additive local covariance control $\si$.\ In the Brownian case, this yields a control representation, which is akin to the $G$-expectation.

\subsection{Related literature}
A prominent approach to {dynamic} risk measurement is based on backward stochastic differential equations (BSDEs) in a Brownian filtration. Beginning with $g$-expectations, BSDEs provide time-consistent nonlinear evaluations and, for convex drivers, yield convex dynamic risk measures; cf., e.g., \cite{coquet2022filtration,elkaroui1997bsde,peng2004nonlinearexpectations,rosazzagianin2006riskmeasures,xu2016multidimensional}. In this setting, the local structure of the dynamic evaluation is encoded directly in the BSDE driver. Moreover, representation results connect drivers with dynamic penalties and dual formulations under suitable domination or absolute continuity assumptions, cf.\ \cite{delbaen2010representation} and the references therein.

To address {model uncertainty} in continuous time, a related literature develops second-order BSDEs (2BSDEs) and {nonlinear expectations}.\ These frameworks are closely tied to volatility uncertainty and so-called $G$-expectations, and naturally lead to fully nonlinear second-order PDEs, see, e.g., \cite{cheridito2007secondorder,kazitani2015secondorderbsde,nutz2012superhedgingdynamic,peng2007Gexpectation,soner2012wellposedness}.\ Section~\ref{sec.chernoff} makes this link explicit. Ordinary transport scaling leads to a first-order Hamiltonian, which  corresponds to $g$-expectations, while martingale transport scaling leads to a second-order Hamiltonian, which corresponds to $G$-expectations. Dynamic consistency has also been studied from an axiomatic and stability viewpoint.\ In this direction, dynamic risk measures are characterized through stability properties of their representing sets, most notably $m$-stability and related pasting or concatenation conditions, see, e.g., \cite{acciaio2011dynamicrm,bionnadal2009timeconsistent, cheridito2006dynamic,CheriditoKupper2011,delbaen2006mstable,follmer2006convexrm,riedel2004dynamic}.\ Such results clarify when recursive evaluation is possible and connect dynamic risk measurement to robust dynamic programming principles.

Closer in spirit to the present paper are contributions that seek an {infinitesimal} description of dynamic risk evaluation via generator-type objects.\ For instance, the work \cite{pichler2022quantification} derives a risk generator as the small-time limit of iterated coherent risk evaluations and identifies it for It\^o diffusions via the local characteristics of the dynamics.\ In the diffusive setting, considered therein, this yields a semilinear ``risk-adjusted'' generator, i.e., the classical diffusion generator plus a correction term given by the absolute value of a volatility-weighted gradient term.\ A similar viewpoint also appears in the Brownian BSDE or $g$-expectation framework.\ The paper \cite{delbaen2010representation} connects the local driver to the dynamic penalty (and dual objects) through representation results. In the case of Markov semigroups consisting of sublinear operators, \cite{kuhn2018viscosity} and \cite{hu2009glev} derive an explicit characterization of the sublinear generator as a supremum of linear generators. In contrast to these works, we consider a Markovian setting that is not tied to a Brownian filtration and allows for drift and jump components, and we focus on robustness via penalties bounded from below by optimal transport costs.\ Our main results show that these penalties lead again to local corrections in terms of first-order Hamiltonians (drift uncertainty), for classical optimal transport costs, and second-order Hessian terms (volatility uncertainty) in the case of martingale optimal transport.

Finally, recursive utility and stochastic differential utility provide another influential paradigm for dynamic evaluations in economic theory and finance.\ In continuous time, intertemporal preferences are specified through a local aggregator, leading to time-consistent nonlinear valuations that admit BSDE-type characterizations in Brownian settings, cf.~\cite{DuffieEpstein1992,Lazrak2004,schroder1999optimal}.\ While the economic interpretation differs from risk measurement, both approaches share that the evaluation is determined by an infinitesimal object. The present paper contributes by deriving such a local characterization from transport-controlled robustness in Markovian models, thus linking transport-based ambiguity to explicit generator corrections.

The remainder of the paper is organized as follows.\
Section~\ref{sec.setup} introduces the Markovian reference model, the one-step risk operators, and the standing assumptions.
In Section~\ref{sec:first order}, we study first-order transport-penalized perturbations and identify the corresponding drift-type correction in the risk generator.
Section~\ref{sec:second order} is devoted to martingale optimal transport penalizations, where the leading correction is a nonlinear second-order term acting on the Hessian.\ Section~\ref{sec.chernoff} proves the nonlinear Chernoff approximation for the iterated one-step evaluations and identifies the limiting semigroups as stochastic control problems for drift and volatility uncertainty.\ The appendix contains an elementary auxiliary result for convex functions on the real line
as well as localization estimates for convex monotone semigroups used in the Chernoff approximation and the control representations.

\section{Setup and preliminaries}\label{sec.setup}
Throughout, let $d\in\N$ and equip $\R^d$ with the Euclidean norm $|\cdot|$, the
standard inner product $\langle\cdot,\cdot\rangle$, and the Borel $\sigma$-algebra
$\Bc(\R^d)$.
We write $\Pc(\R^d)$ for the set of all Borel probability measures on $\R^d$ and
$\Cb(\R^d)$ for the space of all bounded continuous functions $f\colon\R^d\to\R$.
For a bounded map $f\colon\R^d\to\R^m$ with $m\in\N$, we set
$\|f\|_\infty:=\sup_{x\in\R^d}|f(x)|$.

We work on $\Cb(\R^d)$ endowed with the {mixed topology}, also called the
{strict topology}.  For our purposes it suffices to work
with the associated notion of sequential convergence. Given
$(f_n)_{n\in\N}\subset\Cb(\R^d)$ and $f\in\Cb(\R^d)$, we write $f_n\to f$ if
\begin{equation}\label{eq.conv.mixed}
\sup_{n\in\N}\|f_n\|_\infty<\infty
\quad\text{and}\quad
\lim_{n\to\infty}\sup_{|x|\le r}\big|f_n(x)-f(x)\big|=0
\quad\text{for all }r\ge0.
\end{equation}
The mixed topology is convenient for the analysis of
convex monotone semigroups, since the relevant operators are typically
{sequentially continuous} with respect to \eqref{eq.conv.mixed} and the
required compactness arguments are naturally formulated in this setting, see,
e.g., \cite{BlessingDenkKupperNendel2025,BlessingKupperNendel2025}.
For a family $(f_t)_{t>0}\subset\Cb(\R^d)$ and
$f\in\Cb(\R^d)$, we write $f_t\to f$ as $t\downarrow 0$ if
$f_{t_n}\to f$ for every sequence $(t_n)_{n\in\N}\subset(0,\infty)$ with
$t_n\to 0$.

We denote by $\Lip=\Lip(\R^d)$ the space of all Lipschitz continuous functions
$f\colon\R^d\to\R$, meaning that there exists $L\ge0$ such that
\begin{equation}\label{eq.lip}
|f(x_1)-f(x_2)|\le L|x_1-x_2|,\qquad x_1,x_2\in\R^d.
\end{equation}
For a Lipschitz map $f\colon\R^d\to\R^m$ with $m\in\N$, we define its (optimal)
Lipschitz constant by
\[
\|f\|_{\Lip}
:=\inf\Big\{L\ge0 \,\Big|\, |f(x_1)-f(x_2)|\le L|x_1-x_2|
\text{ for all }x_1,x_2\in\R^d\Big\}.
\]
Moreover, $\Lipb$ denotes the space of all bounded Lipschitz
functions $\R^d\to \R$, that is, the space of all $f\in\Lip$ with $\|f\|_\infty<\infty$.

For $\mu,\nu\in\Pc(\R^d)$, we denote by $\Cpl(\mu,\nu)$ the set of all couplings of
$\mu$ and $\nu$, i.e., all $\pi\in\Pc(\R^d\times\R^d)$ whose first marginal equals
$\mu$ and whose second marginal equals $\nu$.
We further denote by $\mart(\mu,\nu)$ the set of all martingale couplings between
$\mu$ and $\nu$, i.e., all $\pi\in\Cpl(\mu,\nu)$ such that, for every
$\varphi\in\Cb(\R^d;\R^d)$ for which the integral below is well-defined,\footnote{That is,
$\int_{\R^d\times\R^d}|\langle\varphi(y),z-y\rangle|\,\pi(dy,dz)<\infty$.}
one has
\begin{equation}\label{eq.def.mart.coupling}
\int_{\R^d\times\R^d}\langle \varphi(y), z-y\rangle\,\pi(\d y,\d z)=0.
\end{equation}
Throughout, let $(\mu_t)_{t>0}\subset \Pc(\R^{d})$ be a family of probability measures on $\Bc(\R^d)$ and let $(\psi_t)_{t>0}$ be a family of maps $\psi_t:\R^d\to\R^d$. We assume:

\begin{enumerate}
    \item[(A)]\makeatletter\def\@currentlabel{A}\makeatother\label{cond.A}
    It holds that $\mu_h\to \delta_0$ as $h\downarrow 0$ and, for every $R\ge 0$,
    \begin{equation}\label{eq.ass.psi_h}
        \sup_{|x|\le R}\,|\psi_h(x)-x|\;\longrightarrow\;0
        \quad\text{as }h\downarrow 0.
    \end{equation}
\end{enumerate}

For $t>0$ and $x\in\R^d$, let $Y_t$ be an $\R^d$-valued random variable with law $\mu_t$, and define
\[
    X_t^x:=\psi_t(x)+Y_t .
\]
We then introduce the family $(P_t)_{t> 0}$ of operators on $\Cb(\R^d)$ by
\begin{equation}\label{eq:reference}
    (P_t f)(x):=\E\big[f(X_t^x)\big]
    =\int_{\R^d} f\big(\psi_t(x)+y\big)\,\mu_t(\d y),
\end{equation}
for $t>0$, $f\in\Cb(\R^d)$, and $x\in\R^d$.\ We point out that the family $(P_t)_{t> 0}$ is {not} necessarily the transition semigroup of a Markov process.\ However, in \cite{nendel2025chernoff}, it is shown that, under suitable stronger conditions than Condition \eqref{cond.A}, the family $(P_t)_{t>0}$ is,  up to a subsequence, Chernoff equivalent to a L\'evy process with drift.\ We refer to \cite{nendel2025chernoff} for the details.

    \section{First-order perturbation}\label{sec:first order}    
In this section, we analyze the first-order behavior of a nonlinear
perturbation of the reference semigroup \eqref{eq:reference}.
To this end, we introduce a convex analogue of the operators
\eqref{eq:reference} via a family $(I_t)_{t\ge 0}$ of operators
on $\Cb(\R^d)$.\
For $t>0$, the operator $I_t$ is defined by		\begin{equation}\label{eq.operatorI}
		(I_tf)(x) := \sup_{\nu \in \Pc(\R^d)} \left(\int_{\R^d} f(\psi_{t}(x)+z)\nu(\d z) - \alpha_{t}(\nu)\right)\quad\text{for all }f\in \Cb(\R^d)\text{ and }x\in \R^d.
	\end{equation}
Here, $(\alpha_t)_{t> 0}$ is a family of penalizations $\Pc(\R^d) \to [0,\infty] $ subject to the following assumption: 
\begin{enumerate}
    \item[(P)]\makeatletter\def\@currentlabel{P}\makeatother\label{cond.P} There exist $h_0>0$ and a nondecreasing function $c\colon [0,\infty)\to \R$ with \[
\lim_{v\to\infty}\frac{c(v)}{v}=\infty
\] and
    \[
        \alpha_h(\nu)\geq \inf_{\pi\in \Cpl(\mu_h,\nu)}\int_{\R^d\times \R^d} h c\bigg(\frac{|z-y|}{h}\bigg)\,\pi(\d y,\d z)\quad\text{for all }h\in (0,h_0)\text{ and }\nu\in \Pc(\R^d)
    \]
    Moreover, $\alpha_h(\mu_h)=0$ for all $h\in (0,h_0)$.
\end{enumerate}	

 Replacing $c$ by $c^{**}\leq c$, see Lemma \ref{lem.conjugate} b), we may w.l.o.g.\ assume that $c$ is convex, so that $c$ is continuous and $(0,\infty)\to \R,\; v\mapsto \frac{c(v)}{v}$ is nondecreasing since, by Assumption \eqref{cond.P}, $c(0)\leq 0$.\footnote{Since $c$ is nondecreasing,
     $c(0)\leq \inf_{\pi\in \Cpl(\mu_h,\mu_h)}\int_{\R^d\times \R^d} h c\big(\frac{|z-y|}{h}\big)\,\pi(\d y,\d z)\leq \alpha_h(\mu_h)=0$ for $h\in (0,h_0)$.}\ 
     We define the convex conjugate of $c$ by
\[
c^\ast(w):=\sup_{v\ge 0}\big(vw-c(v)\big) \quad\text{for all } w\ge 0.
\]
By Lemma \ref{lem.conjugate} a), the supremum is finite for every $w\ge 0$, so that $c^\ast(w)\in [0,\infty)$ for all $w\geq 0$.

    Observe that probability measures $\nu\in \Pc(\R^d)$ with $\alpha_t(\nu)=\infty$ do not affect the value
of the supremum in \eqref{eq.operatorI}, so that the supremum may equivalently
be taken over the effective domain
\[
\Pc_t := \big\{\nu\in\Pc(\R^d)\,\big|\,\alpha_t(\nu)<\infty\big\}\quad \text{for all }t>0.
\]
This convention will be used throughout this section.\ We start with a series of observations.
\begin{remark}\label{rem.ass.firstorder}\
Assume that Condition \eqref{cond.P} is satisfied.
\begin{enumerate}
    \item[a)] Since $c$ is continuous and nondecreasing, by \cite[Theorem 4.1]{villani2008optimal}, for all $h\in (0,h_0)$ and $\nu\in \Pc_h$, there exists an optimal coupling $\pi^*_{h,\nu}\in \Cpl(\mu_h,\nu)$ with 
    \begin{align}
        \notag \int_{\R^d\times \R^d} hc\bigg(\frac{|z-y|}{h}\bigg)\,\pi^*_{h,\nu}(\d y,\d z)&=\inf_{\pi\in \Cpl(\mu_h,\nu)}\int_{\R^d\times \R^d} hc\bigg(\frac{|z-y|}{h}\bigg)\,\pi(\d y,\d z)\\
        &\leq \alpha_h(\nu)<\infty. \label{eq.opt.coupling}
        \end{align}   
    \item[b)] Let $f \in \Lip(\mathbb{R}^d)$ and $h\in (0,h_0)$.\ Then, the assumptions on the family $(\mu_t)_{t>0}$ do, in general, not ensure that $f$ is $\mu_t$-integrable for any $t>0$.\ However, for all $\nu \in \mathcal{P}_h$ and any coupling $\pi \in \Cpl(\mu_h,\nu)$ with
       \[
         \int_{\R^d\times \R^d} hc\bigg(\frac{|z-y|}{h}\bigg)\,\pi(\d y,\d z)<\infty,
        \]
        it holds
        \begin{align}
          \notag \int_{\R^d\times \R^d} |f(z)-f(y)|\,\pi(\d y,\d z)&\leq h\int_{\R^d\times \R^d} \|f\|_{\Lip}\frac{|z-y|}h-c\bigg(\frac{|z-y|}h\bigg)\,\pi(\d y,\d z)\\
          \notag &\quad + \int_{\R^d\times \R^d} hc\bigg(\frac{|z-y|}{h}\bigg)\,\pi(\d y,\d z)\\
          &\leq h c^*(\|f\|_{\Lip})+\int_{\R^d\times \R^d} hc\bigg(\frac{|z-y|}{h}\bigg)\,\pi(\d y,\d z)<\infty, \label{eq.wass.finite}
        \end{align}
        so that the mapping $\R^d\times \R^d \to \R,\; (y,z)\mapsto f(z)-f(y)$ is $\pi$-integrable.\ Moreover, by \eqref{eq.wass.finite}, for all $\nu\in \Pc_h$ and any coupling $\pi\in  \Cpl(\mu_h,\nu)$ with 
        \begin{equation}\label{eq.almost.opt.coupling}
   \int_{\R^d\times \R^d} hc\bigg(\frac{|z-y|}{h}\bigg)\, \pi(\d y,\d z)\leq \al_h(\nu),
\end{equation}
       it holds
\begin{equation}\label{eq.apriori.lipschitz}
         \int_{\R^d\times \R^d} |f(z)-f(y)|\,\pi(\d y, \d z )-\alpha_h(\nu)\leq hc^*\big(\|f\|_{\Lip}\big).
          \end{equation}
          In particular, if $f$ is also bounded, by \eqref{eq.opt.coupling},
          \begin{align}
  \notag        0&\leq (I_h f)(x)-(P_h f)(x)\\
\notag &\qquad \quad =
\sup_{\nu\in\Pc_h}
\bigg(
\inf_{\pi\in\Cpl(\mu_h,\nu)}
\int_{\R^d\times \R^d} f\big(\psi_h(x)+z\big)-f\big(\psi_h(x)+y\big)\,\pi(\d y,\d z)
-\alpha_h(\nu)
\bigg) \label{eq.apriori.lipschitz.thm}\\
&\qquad\quad \le hc^*\big(\|f\|_{\Lip}\big) \quad \text{for all }x\in \R^d.
\end{align}
\item[c)] Let $f \in \Lip(\mathbb{R}^d)$, $h\in (0,h_0)$, $\nu\in \Pc_h$.
Then, for all $\pi\in\Cpl(\mu_h,\nu)$ with
\begin{equation}\label{eq.finite.first.moment}
\int_{\R^d\times\R^d} |z-y|\,\pi(\d y,\d z)<\infty,
\end{equation}
the integral
\[
\int_{\R^d\times\R^d} f(y)-f(z)\,\pi(\d y,\d z)
\]
is well-defined and independent of the choice of $\pi\in\Cpl(\mu_h,\nu)$ among all couplings satisfying the integrability condition \eqref{eq.finite.first.moment}.\ Indeed, for $\pi\in\Cpl(\mu_h,\nu)$ with \eqref{eq.finite.first.moment},
\[
\int_{\R^d\times\R^d}|f(z)-f(y)|\,\pi(\d y,\d z)\leq \|f\|_{\Lip} \int_{\R^d\times\R^d} |z-y|\,\pi(\d y,\d z)<\infty.
\]
Now, let $\pi_1,\pi_2\in\Cpl(\mu_h,\nu)$ satisfy \eqref{eq.finite.first.moment} and define the truncation
\[
f_n(x):=(-n)\vee f(x)\wedge n\quad \text{for all }n\in \N\text{ and }x\in \R^d.
\]
Then, for all $n\in \N$, $f_n$ is bounded and continuous, so that
\begin{align*}
 \int_{\R^d\times\R^d}f_n(z)-f_n(y)\,\pi_1(\d y,\d z)&= \int_{\R^d}f_n(z)\,\nu(\d z)-\int_{\R^d}f_n(y)\,\mu_h(\d y)\\
 &=\int_{\R^d\times\R^d}f_n(z)-f_n(y)\,\pi_2(\d y,\d z).
\end{align*}
Since the truncation is $1$-Lipschitz, we have
$|f_n(z)-f_n(y)|\le |f(z)-f(y)|$ for all $y,z\in \R^d$.\ Hence, by dominated convergence,
\begin{align*}
 \int_{\R^d\times\R^d}f(z)-f(y)\,\pi_1(\d y,\d z)&=\lim_{n\to\infty}\int_{\R^d\times\R^d}f_n(z)-f_n(y)\,\pi_1(\d y,\d z)\\
 &=\lim_{n\to\infty}\int_{\R^d\times\R^d}f_n(z)-f_n(y)\,\pi_2(\d y,\d z)\\
 &= \int_{\R^d\times\R^d}f(z)-f(y)\,\pi_2(\d y,\d z).
\end{align*}
 \end{enumerate}
 \end{remark}

We now identify the first-order limit of the rescaled nonlinear perturbations.
For each $h\in (0,h_0)$, we introduce an auxiliary function $g_h$ which captures the
maximal first-order displacement induced by perturbations of the
reference measure $\mu_h$, penalized by the transport cost $\alpha_h$.

To that end, for all $h\in (0,h_0)$ and all $\nu\in \Pc_h$, we fix an arbitrary coupling $\pi_{h,\nu}^*\in \Cpl(\mu_h,\nu)$ with \eqref{eq.finite.first.moment} and consider the map $g_h\colon \R^d\to \R$, given by 	\begin{equation}\label{eq.def.g_h}
		g_h(m) :=  \sup_{\nu \in \Pc_h}\bigg(\int_{\R^{d}\times \R^{d}} \langle m,z-y\rangle\,\pi^*_{h,\nu}(\d y,\d z) - \alpha_h(\nu)\bigg)\quad \text{for all }m\in \R^d.
	\end{equation}
By Remark \ref{rem.ass.firstorder} c), the definition of the function $g_h$ is independent of the choice of the coupling $\pi_{h,\nu}^*\in \Cpl(\mu_h,\nu)$ among all couplings satisfying the integrability condition \eqref{eq.finite.first.moment} for all $h\in (0,h_0)$ and $\nu\in \Pc_h$.\ In particular, we may assume w.l.o.g.\ that 
\[
   \int_{\R^d\times \R^d} hc\bigg(\frac{|z-y|}{h}\bigg)\, \pi_{h,\nu}^*(\d y,\d z)\leq \al_h(\nu),
\]
see Remark \ref{rem.ass.firstorder} a), i.e., $\pi_{h,\nu}^*$ satisfies \eqref{eq.almost.opt.coupling} in Remark \ref{rem.ass.firstorder} b) for all $h\in (0,h_0)$ and $\nu\in \Pc_h$.

Let $h\in (0,h_0)$ and $m\in \R^d$.\ Then, the quantity $g_h(m)$ measures the maximal average
displacement in direction $m$ that can be produced by transporting the reference
measure $\mu_h$ to probability measures $\nu\in \Pc_h$ at scale $h$, subject to the cost
constraint $\alpha_h$.\ For a given probability measure $\nu\in \Pc_h$, the coupling
$\pi_{h,\nu}^*\in\Cpl(\mu_h,\nu)$, satisfying \eqref{eq.almost.opt.coupling}, represents an efficient transport plan and the
integral $\int_{\R^d\times\R^d}\langle m,z-y\rangle\,\pi_{h,\nu}^*(\d y,\d z)$
corresponds to the resulting mean directional drift.

The following auxiliary result provides a crucial a priori estimate. 

\begin{lemma}\label{lem.apriori1}
 Assume that Condition \eqref{cond.P} is satisfied.\ Then, for all $L\geq 0$, there exists $a\geq 0$ such that, for all $h\in (0,h_0)$ and $f\in \Lip(\R^d)$ with $\|f\|_{\Lip}\leq L$,
 \begin{align}
  \notag\sup_{\nu\in \Pc_h}\bigg(\int_{\R^d\times \R^d} &f(z)-f(y)\,\pi^*_{h,\nu}(\d y, \d z )-\alpha_h(\nu)\bigg)\\
  &\qquad=\sup_{\nu\in \Pc_h^a}\bigg(\int_{\R^d\times \R^d} f(z)-f(y)\,\pi^*_{h,\nu}(\d y, \d z )-\alpha_h(\nu)\bigg),\label{eq.lem.apriori1}
 \end{align}
 where, for $h\in (0,h_0)$ and $a\geq 0$, the set $\Pc_h^a$ consists of all $\nu\in \Pc_h$ with
 \[
 \int_{\R^d\times \R^d}c\bigg(\frac{|y-z|}{h}\bigg)\, \pi^*_{h,\nu}(\d y,\d z)\leq a.
 \]
\end{lemma}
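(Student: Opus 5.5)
Since $\Pc_h^a\subset\Pc_h$, the inequality "$\geq$" in \eqref{eq.lem.apriori1} is trivial. So the work is to show that measures $\nu\in\Pc_h\setminus\Pc_h^a$ cannot beat the value already attained at $\nu=\mu_h$, provided $a$ is large. Condition \eqref{cond.P} gives $\alpha_h(\mu_h)=0$, and $\mu_h\in\Pc_h$. For $\nu=\mu_h$ the integral in \eqref{eq.lem.apriori1} vanishes, by Remark \ref{rem.ass.firstorder} c): the identity coupling, which has zero displacement, may replace $\pi^*_{h,\mu_h}$. Hence the left-hand side is at least $0$. We must also check $\mu_h\in\Pc_h^a$, that is, $\int c(|y-z|/h)\,\d\pi^*_{h,\mu_h}\le a$. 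By the choice \eqref{eq.almost.opt.coupling} this integral is at most $\alpha_h(\mu_h)/h=0$, so $\mu_h\in\Pc_h^a$ for every $a\ge0$. Thus both suprema are $\ge 0$, and it suffices to show
\[
\int_{\R^d\times\R^d} f(z)-f(y)\,\pi^*_{h,\nu}(\d y,\d z)-\alpha_h(\nu)<0\quad\text{for all }\nu\in\Pc_h\setminus\Pc_h^a .
\]

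To prove this, fix $\nu\in\Pc_h$ and write $C_\nu:=\int c(|z-y|/h)\,\d\pi^*_{h,\nu}$, so that $hC_\nu\le\alpha_h(\nu)$. I use the Lipschitz bound $f(z)-f(y)\le L|z-y|$ together with a splitting of $c$ in the spirit of \eqref{eq.wass.finite}, but keeping a fraction of the cost. Set $\tilde c:=c/2$. For $v\ge0$,
\[
Lv-c(v)=\big(Lv-\tilde c(v)\big)-\tilde c(v)\le \tilde c^*(L)-\tfrac12 c(v),
\]
where $\tilde c^*(L)=\tfrac12 c^*(2L)<\infty$ by Lemma \ref{lem.conjugate} a). Integrating with $v=|z-y|/h$ and multiplying by $h$ gives
\[
\int f(z)-f(y)\,\d\pi^*_{h,\nu}-\alpha_h(\nu)\le \int hL\tfrac{|z-y|}{h}-hc\big(\tfrac{|z-y|}{h}\big)\,\d\pi^*_{h,\nu}\le h\Big(\tfrac12 c^*(2L)-\tfrac12 C_\nu\Big).
\]
Choose $a:=c^*(2L)+1$, which depends only on $L$. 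If $\nu\notin\Pc_h^a$, then $C_\nu>a$, and the right-hand side is $\le -h/2<0$. This proves the displayed strict inequality, and the result follows, uniformly in $h\in(0,h_0)$ and in $f$ with $\|f\|_{\Lip}\le L$.

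The main subtle point is integrability. The negative part of $c$ must be handled with care, since $c$ can be negative near $0$ with $c(0)\le0$. Because $c$ is convex and nondecreasing, $c$ is bounded below by $c(0)$, so $C_\nu$ is well defined in $(-\infty,\infty]$ and finite for $\nu\in\Pc_h$. The integrals of $f(z)-f(y)$ are finite by Remark \ref{rem.ass.firstorder} b). With these facts the splitting above is legitimate.
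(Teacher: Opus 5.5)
Your proof is correct. The overall strategy agrees with the paper's. The reference measure $\mu_h$ gives value $0$, so only measures whose value is not too negative can matter, and one shows that these have bounded transport cost.

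The key estimate is obtained differently.
\begin{itemize}
\item The paper fixes an $h$-optimizer $\nu_h$ and deduces $\alpha_h(\nu_h)\le h+L\int|z-y|\,\pi^*_{h,\nu_h}(\d y,\d z)$. It then applies Jensen's inequality to the convex $c$, getting $c(\bar v)\le 1+L\bar v$ for the mean normalized displacement $\bar v=\int\frac{|z-y|}{h}\,\pi^*_{h,\nu_h}(\d y,\d z)$. Choosing $\gamma$ with $c(v)>1+Lv$ for $v>\gamma$ forces $\bar v\le\gamma$, hence $a=1+L\gamma$.
\item You instead keep half of the cost and use the pointwise Fenchel--Young bound $Lv-c(v)\le\frac12c^*(2L)-\frac12c(v)$. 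This shows directly that every $\nu\in\Pc_h\setminus\Pc_h^a$, with $a=c^*(2L)+1$, has value below $-h/2$.
\end{itemize}

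What your route buys:
\begin{itemize}
\item It needs neither Jensen's inequality nor convexity of $c$, only $c^*(2L)<\infty$ and $c\ge c(0)$.
\item It gives $a$ explicitly through the conjugate.
\item It obtains equality of the suprema from a uniform gap, rather than from a statement about near-optimizers.
\end{itemize}
The paper's route, in return, also yields the first-moment bound $\int\frac{|z-y|}{h}\,\d\pi^*_{h,\nu}\le\gamma$ for near-optimizers, and it transfers line by line to Lemma \ref{lem.apriori2}.

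You also correctly separated two facts. The value at $\mu_h$ is independent of the chosen coupling, by Remark \ref{rem.ass.firstorder} c). Membership $\mu_h\in\Pc_h^a$, by contrast, does depend on the fixed $\pi^*_{h,\mu_h}$, and you checked it via \eqref{eq.almost.opt.coupling}.
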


\begin{proof}
Let $L\geq 0$. By Assumption \eqref{cond.P}, there exists some $\gamma\geq 0$ such that
\begin{equation}\label{eq.lem.apriori1.proof}
    c(v)>1+Lv \quad\text{for all }v \in (\gamma,\infty).
\end{equation}
Let $h\in (0,h_0)$ and $f\in \Lip(\R^{d})$ with $\|f\|_{\Lip}\leq L$.\ Then, by Remark \ref{rem.ass.firstorder} c), we may w.l.o.g. assume that $\pi_{h,\mu_h}^*=\mu_h\circ (y,y)^{-1}$, so that 
\begin{equation}\label{eq.reference.zero}
\int_{\R^d\times \R^d} f(z)-f(y)\,\pi_{h,\mu_h}^*(\d y, \d z )-\alpha_h(\mu_h)=\int_{\R^d} f(y)-f(y)\,\mu_h(\d y)=0,
\end{equation}
where, in the first step, we used the fact that $\alpha_h(\mu_h)=0$.\
Since $\mu_h\in \Pc_h$, there exists some $\nu_h\in \Pc_h$ with
\begin{align*}
    0&\leq \sup_{\nu\in \Pc_h}\bigg(\int_{\R^d\times \R^d} f(z)-f(y)\,\pi_{h,\nu}^*(\d y, \d z )-\alpha_h(\nu)\bigg)\\
    &\leq h+\int_{\R^d\times \R^d} f(z)-f(y)\,\pi_{h,\nu_h}^*(\d y, \d z )-\alpha_h(\nu_h)\\
    &\leq h+L\int_{\R^d\times \R^d} |z-y|\,\pi_{h,\nu_h}^*(\d y, \d z )-\alpha_h(\nu_h).
\end{align*}
Using Jensen's inequality and Assumption \eqref{cond.P}, we thus find that
\begin{align*}
    c\bigg(\int_{\R^d\times \R^d} \frac{|z-y|}h\,\pi_{h,\nu_h}^*(\d y, \d z )\bigg)&\leq\int_{\R^d\times \R^d}c\bigg(\frac{|y-z|}{h}\bigg)\, \,\pi^*_{h,\nu_h}(\d y,\d z)\leq \frac{\alpha_h(\nu_h)}{h}\\
    &\leq 1+L\int_{\R^d\times \R^d} \frac{|z-y|}h\,\pi_{h,\nu_h}^*(\d y, \d z ),
\end{align*}
which, by \eqref{eq.lem.apriori1.proof}, implies that $\int_{\R^d\times \R^d} \frac{|z-y|}h\,\pi_{h,\nu_h}^*(\d y, \d z )\leq \gamma$. Hence,
\[
    \int_{\R^d\times \R^d} c\bigg(\frac{|z-y|}{h}\bigg)\, \pi_{h,\nu_h}^*(\d y,\d z)\leq 1+L\int_{\R^d\times \R^d} \frac{|z-y|}h\,\pi_{h,\nu_h}^*(\d y, \d z )\leq 1+L\gamma =:a.
\]
 This shows that every $h$-optimizer of the left-hand side of \eqref{eq.lem.apriori1} belongs to $\Pc_h^a$.\ The proof is complete.
\end{proof}

The previous lemma can be understood as a compactness result in the sense that, for bounded
Lipschitz continuous functions $f\colon \R^d\to \R$, only perturbations with uniformly bounded transport
cost contribute to $I_hf$ for all $h\in (0,h_0)$.\
This allows to extract a limiting convex function $g$, as the following proposition shows.

\begin{proposition}\label{prop.derivative.g_h}
    Assume that Condition \eqref{cond.P} is satisfied.\ Then, for every null sequence in $(0,h_0)$, there exists a subsequence $(h_n)_{n\in \N}$ and a convex function $g\colon \R^d\to \R$ with $g(0)=0\leq g(m)\leq c^*(|m|)$ for all $m\in \R^d$ such that
\begin{equation}\label{eq.existence.g}
        \sup_{|m|\leq L}\bigg|g(m)-\frac{g_{h_n}(m)}{h_n}\bigg|\to 0\quad \text{as }n\to\infty\text{ for all }L\geq0.
    \end{equation}
\end{proposition}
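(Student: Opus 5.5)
The plan is an Arzel\`a--Ascoli argument for the rescaled functions $\tilde g_h:=g_h/h$ on $\R^d$. We show that this family is convex, uniformly bounded on bounded sets and uniformly locally Lipschitz. A diagonal extraction then produces a subsequence converging locally uniformly, and the limit inherits convexity and the bounds.

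\emph{Step 1: pointwise bounds.} Each $g_h$ is a supremum of affine functions of $m$, namely $m\mapsto\int\langle m,z-y\rangle\,\pi^*_{h,\nu}(\d y,\d z)-\alpha_h(\nu)$, which are finite by the choice of $\pi^*_{h,\nu}$ satisfying \eqref{eq.finite.first.moment}. Hence $g_h$ is convex. Taking $\nu=\mu_h$ with the diagonal coupling, as in \eqref{eq.reference.zero}, gives $g_h\ge 0$. For the upper bound, apply \eqref{eq.wass.finite} and \eqref{eq.apriori.lipschitz} to the linear map $f(x)=\langle m,x\rangle$, whose Lipschitz constant is $|m|$. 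Since $\pi^*_{h,\nu}$ satisfies \eqref{eq.almost.opt.coupling}, every term in the supremum is at most $hc^*(|m|)$. Thus $0\le g_h(m)/h\le c^*(|m|)$, and in particular $g_h(0)=0$.

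\emph{Step 2: uniform local Lipschitz bound.} Fix $L\ge0$ and take $a$ from Lemma~\ref{lem.apriori1} for Lipschitz constant $L+1$. For $|m|\le L+1$ the supremum defining $g_h(m)$ can be restricted to $\nu\in\Pc_h^a$; this is the lemma applied to $f=\langle m,\cdot\rangle$. For such $\nu$, Jensen's inequality for the convex nondecreasing $c$ gives
\[
c\Big(\int\tfrac{|z-y|}{h}\,\d\pi^*_{h,\nu}\Big)\le a .
\]
Since $c(v)\to\infty$ superlinearly, this yields $\int|z-y|\,\d\pi^*_{h,\nu}\le h\gamma$ with $\gamma$ independent of $h$. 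Consequently, for $|m|,|m'|\le L$,
\[
g_h(m)-g_h(m')\le \sup_{\nu\in\Pc_h^a}\int\langle m-m',z-y\rangle\,\d\pi^*_{h,\nu}\le h\gamma|m-m'|,
\]
and by symmetry $g_h/h$ is $\gamma$-Lipschitz on the ball of radius $L$, uniformly in $h\in(0,h_0)$. (One could instead use the standard fact that a convex function bounded by $c^*(L+1)$ on the ball of radius $L+1$ is $c^*(L+1)$-Lipschitz on the ball of radius $L$; either route works.)

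\emph{Step 3: extraction and passage to the limit.} Given a null sequence, Arzel\`a--Ascoli on each closed ball $\{|m|\le k\}$ combined with a diagonal argument gives a subsequence $(h_n)$ and a continuous $g$ such that \eqref{eq.existence.g} holds for all $L$. Pointwise limits preserve convexity and the inequalities $g(0)=0\le g(m)\le c^*(|m|)$, which completes the proof. I expect the only delicate point to be Step 2, namely obtaining equicontinuity that is uniform in $h$. Lemma~\ref{lem.apriori1} or the convexity bound handles it. Checking that $\langle m,\cdot\rangle$ is admissible in the lemma is routine, since the lemma is stated for general Lipschitz $f$.
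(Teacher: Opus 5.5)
Your proof is correct and has the same overall structure as the paper's. You bound $0\le g_h\le hc^*(|\cdot|)$ via the a priori estimate, restrict to $\Pc_h^a$ via Lemma~\ref{lem.apriori1}, and finish with Arzel\`a--Ascoli and a diagonal argument.

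The one real difference is the equicontinuity step. The paper bounds $\int |m_1-m_2|\frac{|z-y|}{h}\,\pi^*_{h,\nu}(\d y,\d z)$ by splitting at a threshold $r$. On the tail it uses $|m_1-m_2|\le 2L$ together with monotonicity of $v\mapsto c(v)/v$, which gives the bound $\frac{2Lr}{c(r)}\int c(\frac{|z-y|}{h})\,\d\pi^*_{h,\nu}\le \frac{2Lar}{c(r)}$. This yields only an $\ep$--$\de$ modulus of continuity. You instead apply Jensen once more on $\Pc_h^a$ to get the uniform first-moment bound $\int \frac{|z-y|}{h}\,\d\pi^*_{h,\nu}\le\gamma$. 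That gives a uniform Lipschitz constant on balls directly, which is shorter and slightly stronger. The paper's splitting has the merit of repeating verbatim the tail estimate needed later in the proof of Theorem~\ref{thm.generator}, where splitting is unavoidable.

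Your parenthetical alternative is even more economical. Convexity of $g_h/h$ together with $0\le g_h/h\le c^*(L+1)$ on the ball of radius $L+1$ gives the $c^*(L+1)$-Lipschitz bound on the ball of radius $L$. With that route, Lemma~\ref{lem.apriori1} is not needed for this proposition at all, though it remains essential for Theorem~\ref{thm.generator}.

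One small slip: $g_h(0)=0$ does not follow from $0\le g_h(0)/h\le c^*(0)$. Indeed $c^*(0)=-c(0)$ may be strictly positive, for instance with $c(v)=\ep v^p-c$ as in Example~\ref{ex.wassersteinfop}. It follows directly from $g_h(0)=\sup_{\nu\in\Pc_h}(-\alpha_h(\nu))=0$, since $\alpha_h\ge 0$ and $\alpha_h(\mu_h)=0$.
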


\begin{proof}
By the Cauchy-Schwarz inequality, the definition of $c^*$, and Assumption \eqref{cond.P}, for all $h\in (0,h_0)$ and $m\in \R^d$,
\begin{equation}\label{eq.upperbound.g_h}
g_{h}(m)\leq h c^*(|m|)+\sup_{\nu \in \Pc_h} \Bigg(\int_{\R^{d}\times \R^{d}} hc\bigg(\frac{|z-y|}h\bigg)\,\pi^*_{h,\nu}(\d y,\d z) - \alpha_h(\nu)\Bigg)\leq hc^*(|m|)
\end{equation}
and, by \eqref{eq.reference.zero},
\[
g_{h}(m)\geq \int_{\R^d\times \R^d} \langle m,z-y\rangle\,\pi^*_{h,\mu_h}(\d y,\d z)-\alpha_h(\mu_h)=0.
\]
Moreover, one readily verifies that $g_h\colon \R^d\to \R$ is convex with $g_h(0)=0$ for all $h\in (0,h_0)$.

Now, let $L\geq0$, $\ep>0$, and $a\geq0$ as in Lemma \ref{lem.apriori1}.\ Then, by Assumption \eqref{cond.P}, there exists some $r>0$ such that $c(r)>0$ and $\frac{4Lar}{c(r)}\leq \ep$. Now, let $h\in (0,h_0)$ and $m_1,m_2\in \R^d$ with $|m_i|\leq L$ for $i=1,2$ and $2r|m_1-m_2|\leq \ep$.\ Then, 
\begin{align*}
\bigg|\frac{g_{h}(m_1)}h-\frac{g_{h}(m_2)}h\bigg|&\leq \sup_{\nu \in \Pc_h^a}\int_{\R^{d}\times \R^{d}} |m_1-m_2|\frac{|z-y|}{h}\,\pi_{h,\nu}^*(\d y,\d z)\\
&\leq r|m_1-m_2|+2L \int_{\big\{\frac{|z-y|}{h}>r\big\}} \frac{|z-y|}{h}\,\pi_{h,\nu}^*(\d y,\d z)\\
& \leq \frac{\ep}2+ \frac{2Lr}{c(r)} \int_{\R^{d}\times \R^{d}}c\bigg(\frac{|z-y|}h\bigg)\,\pi_{h,\nu}^*(\d y,\d z)\leq \ep.
\end{align*}
The statement now follows from the Arzel\`a-Ascoli theorem together with a diagonal argument.
\end{proof}

The following theorem is the main result of this section.\ We denote by $ \Cb^{1}(\R^{d}) $ the space of all $ f \in \Cb(\R^{d})$ with bounded and continuous derivative $\nabla f\colon \R^d\to \R^d$.	

\begin{theorem}\label{thm.generator}\
Assume that conditions \eqref{cond.A} and \eqref{cond.P} are satisfied.
\begin{enumerate}
\item[a)]
For all $R\geq 0$ and $f\in \Cb^1(\R^d)$,
\[
    \lim_{h\downarrow 0}\sup_{|x|\leq R}\bigg|\frac{(I_{h}f)(x)-(P_{h}f)(x)-g_{h}\big(\nabla f(x)\big)}{h}\bigg|=0.
\]
 \item[b)] Let $(h_n)_{n\in \N}\subset (0,h_0)$  be a null sequence and $g\colon \R^d\to \R$ such that \eqref{eq.existence.g} holds.\ Then, for all $ f \in \Cb^1(\R^{d})$,
 \begin{equation}\label{eq.thmgenerator1}
	\frac{I_{h_n}f - P_{h_n}f}{h_n} \to g(\nabla f)\quad\text{as }n\to\infty.
 \end{equation}
\end{enumerate}
\end{theorem}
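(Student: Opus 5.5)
Part a) is a first-order Taylor expansion inside the transport representation, and part b) then follows from a) together with Proposition \ref{prop.derivative.g_h}.

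\emph{Reduction.} Fix $R\ge0$ and $f\in\Cb^1(\R^d)$, and put $L:=\|\nabla f\|_\infty$, so $\|f\|_{\Lip}\le L$. For $x\in\R^d$ write $x_h:=\psi_h(x)$ and $f_{x,h}:=f(x_h+\cdot)$. By \eqref{eq.apriori.lipschitz.thm} and Remark \ref{rem.ass.firstorder} c),
\[
(I_hf)(x)-(P_hf)(x)=\sup_{\nu\in\Pc_h}\bigg(\int_{\R^d\times \R^d} f_{x,h}(z)-f_{x,h}(y)\,\pi^*_{h,\nu}(\d y,\d z)-\alpha_h(\nu)\bigg).
\]
Lemma \ref{lem.apriori1}, with $a$ depending only on $L$, lets us restrict this supremum to $\Pc_h^a$. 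The same lemma applied to the linear function $z\mapsto\langle m,z\rangle$ with $|m|\le L$ restricts the supremum defining $g_h(m)$ to $\Pc_h^a$ as well. Since $|\sup F-\sup G|\le\sup|F-G|$, it then suffices to show
\[
\frac1h\sup_{|x|\le R}\sup_{\nu\in\Pc_h^a}\int_{\R^d\times \R^d} \Big|f(x_h+z)-f(x_h+y)-\langle\nabla f(x),z-y\rangle\Big|\,\pi^*_{h,\nu}(\d y,\d z)\longrightarrow0 .
\]

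\emph{Splitting the integrand.} I split according to three regimes.
\begin{enumerate}
\item[(i)] On the set $\{|z-y|>rh\}$, the integrand is at most $2L|z-y|$. As in the proof of Proposition \ref{prop.derivative.g_h}, its contribution divided by $h$ is at most $2Lra/c(r)$, which is small once $r$ is chosen large, using $c(v)/v\to\infty$.
\item[(ii)] On $\{|z-y|\le rh\}\cap\{|y|\le \rho\}$, the mean value theorem gives the bound $|z-y|\,\om(rh+\rho+\de_h)$. Here $\om$ is the modulus of continuity of $\nabla f$ on the ball of radius $R+1+\rho$, and $\de_h:=\sup_{|x|\le R}|\psi_h(x)-x|\to0$ by \eqref{eq.ass.psi_h}. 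Dividing by $h$, the contribution is at most $r\,\om(rh+\rho+\de_h)$. This is small if $\rho$ is small and $h$ is small.
\item[(iii)] On $\{|z-y|\le rh\}\cap\{|y|>\rho\}$, the integrand is at most $2L|z-y|\le 2Lrh$. Dividing by $h$, the contribution is at most $2Lr\,\mu_h(\{|y|>\rho\})$, which tends to $0$ because $\mu_h\to\de_0$ by Condition \eqref{cond.A}.
\end{enumerate}
The order of choices is: first $r$, then $\rho$, then $h$. All bounds are uniform in $|x|\le R$ and in $\nu\in\Pc_h^a$, which proves a).

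\emph{Main obstacle.} The delicate point is region (ii): $\nabla f$ is only continuous, so its modulus is merely local. I handle this by localizing in $y$ through the concentration $\mu_h\to\de_0$, and in $x$ through $\psi_h(x)\to x$. Working with $\nabla f(x)$ rather than $\nabla f(x_h)$ is what makes $\de_h$ appear in the argument of $\om$.

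\emph{Part b).} By a), $\sup_{|x|\le R}\big|(I_{h_n}f-P_{h_n}f)(x)/h_n-g_{h_n}(\nabla f(x))/h_n\big|\to0$. Since $|\nabla f(x)|\le L$ for all $x$, the uniform convergence \eqref{eq.existence.g} on $\{|m|\le L\}$ gives
\[
\sup_{x\in\R^d}\big|g_{h_n}(\nabla f(x))/h_n-g(\nabla f(x))\big|\to0 .
\]
Combining the two yields locally uniform convergence to $g(\nabla f)$. Uniform boundedness holds because $0\le (I_hf-P_hf)/h\le c^*(L)$ by \eqref{eq.apriori.lipschitz.thm}. Finally, $g(\nabla f)$ is bounded by $c^*(L)$ and continuous, since $g$ is continuous as a finite convex function on $\R^d$. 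Hence \eqref{eq.thmgenerator1} holds in the sense of \eqref{eq.conv.mixed}.
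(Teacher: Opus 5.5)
Your proposal is correct and follows the paper's proof essentially step for step. You restrict both suprema to $\Pc_h^a$ via Lemma~\ref{lem.apriori1}, expand to first order around $\nabla f(x)$, and use the same three-region split: large displacements are handled via $c(r)/r$, small displacements with $|y|$ small via local uniform continuity of $\nabla f$ together with \eqref{eq.ass.psi_h}, and small displacements with $|y|$ large via $\mu_h\to\delta_0$. You also make the choices in the same order ($r$, then the radius in $y$, then $h$), and your part b) simply spells out what the paper leaves implicit, namely combining a) with the uniform bound $c^*(\|\nabla f\|_\infty)$ and \eqref{eq.existence.g} on $\{|m|\le\|\nabla f\|_\infty\}$.
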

\begin{proof}
Let $f \in \Cb^1(\R^d)$. Then, by \eqref{eq.apriori.lipschitz.thm},
\[
\sup_{h\in (0,h_0)}\frac{\|I_hf-P_hf\|_\infty}{h}\leq c^*\big(\|\nabla f\|_\infty\big),
\]
so that the claim in part b) follows once we have established part a).\ To that end, let $a\geq 0$ as in Lemma \ref{lem.apriori1} with $L:=\|\nabla f\|_\infty$, so that, for all $h\in (0,h_0)$ and $x\in \R^d$,
\begin{align*}
    \bigg|\frac{(I_hf)(x)-(P_hf)(x)}{h}-\frac{g_h\big(\nabla f(x)\big)}{h}\bigg| &\leq  \sup_{\nu\in \Pc_h^a} \frac1h \bigg| \int_{\R^d}f\big(\psi_{h}(x)+z\big)\, \nu(\d z)-(P_hf)(x)\\
    &\qquad\qquad\qquad - \int_{\R^d\times \R^d}\langle\nabla f(x),z-y\rangle \,\pi^*_{h,\nu}(\d y,\d z)\bigg|.
\end{align*}
Using the fundamental theorem of calculus, for all $\nu\in \Pc_h^a$,
\begin{align*}
    \frac1h \bigg|\int_{\R^d}f\big(\psi_{h}(x)&+z\big)\, \nu(\d z)-(P_hf)(x)
    - \int_{\R^d\times \R^d}\langle\nabla f(x),z-y\rangle \,\pi^*_{h,\nu}(\d y,\d z)\bigg|\\
    &\leq \int_{\R^d\times \R^d}\int_0^1\big|\nabla f\big(\psi_h(x)+y+s(z-y)\big)-\nabla f(x)\big|\frac{|z-y|}{h} \,\d s\, \pi_{h,\nu}^*(\d y,\d z).
\end{align*}	
We split the last integral in three parts and estimate each one of them separately.\ Let $R\geq 0$ and $\ep>0$. Then, there exists $r>0$ such that $\frac{6Lar}{c(r)}\leq \ep$.\ Moreover, by \eqref{eq.ass.psi_h}, there exist $\overline h\in (0,h_0]$ and $\de>0$ such that
\begin{equation}\label{eq.unifcont.gradient}
    \sup_{|x|\leq R}\big|\nabla f\big(\psi_h(x)+u\big)-\nabla f(x)\big|\leq \frac{\epsilon}{3r}\quad\text{for all }h\in \big(0,\overline h\big)\text{ and }u\in \R^d\text{ with }|u|\leq 2\delta.
\end{equation} 
Last but not least, since $\mu_h\to \de_0$ as $h\downarrow0$ by Assumption \eqref{cond.A}, after a potential modification of $\overline h\in (0,h_0]$, we may w.l.o.g.\ assume that $ r\overline h\leq \de$ and
\begin{equation}\label{eq.cond.M.proof.main1}
6Lr\mu_h\big(\big\{y\in \R^d\,\big|\, |y|>\de\big\}\big)\leq\ep\quad \text{for all }h\in \big(0,\overline h\big).
\end{equation}
Then, for all $h\in \big(0,\overline h\big)$, $\nu\in \Pc_h^a$, and $x\in \R^d$ with $|x|\leq R$, by \eqref{eq.unifcont.gradient},
\begin{align*}
    \int_{\{|y|\leq \delta\}\cap \big\{\frac{|z-y|}h\leq r\big\}}\int_0^1 \Big|\nabla f\big(\psi_h(x)+y+s(z-y)\big)-\nabla f(x)\Big|&\frac{|z-y|}{h}\,\d s\,\pi_{h,\nu}^*(\d y,\d z)\leq \frac{\ep}{3},
\end{align*}
by \eqref{eq.cond.M.proof.main1},
\begin{align*}
    \int_{\{|y|> \delta\}\cap \big\{\frac{|z-y|}h\leq r\big\}}&\int_0^1 \Big|\nabla f\big(\psi_h(x)+y+s(z-y)\big)-\nabla f(x)\Big|\frac{|z-y|}{h}\,\d s\,\pi_{h,\nu}^*(\d y,\d z)\\
    &\leq 2Lr\mu_h\big(\big\{y\in \R^d\,\big|\, |y|>\de\big\}\big)\leq\frac{\ep}3,
\end{align*}
and
\begin{align*}
    \int_{\big\{\frac{|z-y|}h> r\big\}}&\int_0^1 \Big|\nabla f\big(\psi_h(x)+y+s(z-y)\big)-\nabla f(x)\Big|\frac{|z-y|}{h}\,\d s\,\pi_{h,\nu}^*(\d y,\d z)\\
    &\leq 2L \int_{\big\{\frac{|z-y|}h> r\big\}} \frac{|z-y|}{h}\,\pi_{h,\nu}^*(\d y,\d z)\leq \frac{2Lr}{c(r)}\int_{\R^d\times \R^d} c\bigg(\frac{|z-y|}{h}\bigg)\,\pi_{h,\nu}^*(\d y,\d z)\\
    & \leq \frac{2Lar}{c(r)}\leq\frac{\ep}3.
\end{align*}
The proof is complete.
\end{proof}

We conclude this section with a series of examples for penalizations that satisfy Assumption \eqref{cond.P} and illustrate how the quantity $g_h$ for $h\in (0,h_0)$ captures transport-induced displacements under the cost constraint $\alpha_h$ in concrete settings. In all cases, the function $g$ from the previous theorem can be explicitly computed and $\frac{g_{h}(m)}{h}\to g(m)$ as $h\downarrow 0$ for all $m\in \R^d$ (even $\frac{g_{t}(m)}{t}= g(m)$ for all $t>0$ and $m\in \R^d$).

\begin{example}[Optimal transport penalization]\label{ex.fop}\
Let $\phi\colon \R^d\to [0,\infty)$ be measurable with $\phi(0)=0$ and $\lim_{|u|\to \infty}\frac{\phi(u)}{|u|}=\infty$. For all $t>0$, let 
    \[
        \alpha_t(\nu):=\inf_{\pi\in \Cpl(\mu_t,\nu)} \int_{\R^d\times \R^d} t\phi\bigg(\frac{z-y}{t}\bigg)\, \pi(\d y,\d z)\in [0,\infty].
    \]
    In order to show that Assumption \eqref{cond.P} is satisfied, let
    \[
        c(v):=\inf_{|u|\geq v} \phi(u)\quad\text{for all }v\geq 0.
    \]
    Then, by assumption, $c\colon [0,\infty)\to [0,\infty)$ is nondecreasing with
    \[
        \frac{c(v)}{v}\geq \inf_{|u|\geq v}\frac{\phi(u)}{|u|}\to \infty\quad \text{as }v\to \infty.
    \]
    Moreover, $\phi(u)\geq c(|u|)$ for all $u\in \R^d$, so that
    \begin{align*}
        \alpha_t(\nu)&= \inf_{\pi\in \Cpl(\mu_t,\nu)} \int_{\R^d\times \R^d} t\phi\bigg(\frac{z-y}{t}\bigg)\, \pi(\d y,\d z)\\
        &\geq \inf_{\pi\in \Cpl(\mu_t,\nu)} \int_{\R^d\times \R^d} tc\bigg(\frac{|z-y|}{t}\bigg)\, \pi(\d y,\d z)
    \end{align*}
    for all $t>0$ and $\nu\in \Pc(\R^d)$.\ For $t>0$, $b\in \R^d$, and $\nu= \mu_t\circ (\,\cdot+ b)^{-1}$, we choose the coupling $\pi_{t,\nu}^* := \mu_t \circ (\,\cdot\,, \,\cdot+ b)^{-1}$, and show that $$\frac{g_t(m)}{t}= \phi^*(m):=\sup_{u\in \R^d}\big(\langle m,u\rangle-\ph(u)\big)\in [0,\infty)$$ for all $t>0$ and $m\in \R^d$.\ By definition of the penalty function $\al$, similar as in  \eqref{eq.upperbound.g_h}, it follows that
    \[
    \frac{g_t(m)}{t}\leq \phi^*(m)\quad \text{for all }t>0\text{ and }m\in \R^d.
    \] 
    On the other hand, choosing $\nu = \mu_t \circ (\,\cdot + b)^{-1}$ for arbitrary $b\in \R^d$,
    \[
        \frac{g_t(m)}{t}\geq \sup_{b\in \R^d} \left(\frac{1}{t}\langle m,b\rangle -\phi\Big(\frac{b}{t}\Big)\right)=\phi^*(m),
    \]
which shows that $\frac{g_t(m)}{t}=\phi^*(m)$ for all $t>0$ and $m\in \R^d$.
\end{example}

\begin{example}[Wasserstein penalization]\label{ex.wassersteinfop}
Let $\phi\colon [0,\infty)\to [0,\infty]$ be convex and lower semicontinuous with $\phi(0)=0 $ and $\phi(v)\neq 0$ for some $v\in (0,\infty)$.\ Then, $\phi$ is nondecreasing and continuous on $ \operatorname{dom}(\phi):= \{v \in [0,\infty)\,|\, \phi(v) < \infty\}$.\ Further, let $ p \in (1,\infty)$, and assume that the map $v \mapsto \phi(v^{1/p})$ is convex. Since $\phi \not \equiv 0$, this implies that $\liminf_{v \to \infty} \frac{{\phi}(v)}{v^p}>0$. Let
\[  
    \alpha_t(\nu):=t\phi\bigg(\frac{\Wc_p(\mu_t,\nu)}t\bigg) \quad\text{for all }\nu\in \Pc(\R^d),
\]
where $\Wc_p$ denotes the Wasserstein $p$-distance between $\mu_t$ and $\nu\in \Pc(\R^d)$, i.e.,
\begin{equation}\label{eq.def.wass.p}
 \Wc_p(\mu_t,\nu):=\bigg(\inf_{\pi \in \Cpl(\mu_t,\nu)} \int_{\R^{d} \times \R^{d}}|z-y|^p\,\pi(\d y,\d z)\bigg)^{1/p}\in [0,\infty],
\end{equation}
 and we set $\phi(\infty):=\infty$.\ Since $ \phi(0)= 0$, it follows that $\alpha_t(\mu_t)=0$ for all $t>0$. 
Since $\liminf_{v \to \infty} \frac{\phi(v)}{v^p}>0$ and $\phi$ is nondecreasing, for all $ \epsilon >0 $, there exists some $a \geq 0$ such that
\[
    \phi(v) \geq \epsilon v^p, \quad \text{for all }v \geq a.
\]
This yields that for all $ v \geq 0 $ we have 
\[
    \phi(v) = \phi(v)- \epsilon v^p + \epsilon v^p \geq \left(\min_{u \in [0,a]}\phi(u) - \epsilon u^p\right) + \epsilon v^p = \epsilon v^p - c, 
\]
where $c:= - \min_{u \in [0,a]}(\phi(u) - \epsilon u^p) \geq 0$. Let $ t > 0 $ and $ \nu \in \Pc_t$.\ Then, 
\begin{align*}
	\frac{\alpha_t(\nu)}{t} &= \phi\left(\frac{\Wc_p(\mu_t,\nu)}{t}\right)
	\geq \epsilon \bigg(\frac{\Wc_p(\mu_t,\nu)}{t}\bigg)^p-c\\
	&= \inf_{\pi \in \Cpl(\mu_t,\nu)} \Bigg(\int_{\R^{d} \times \R^{d}} \epsilon\bigg(\frac{|y-z|}{t}\bigg)^p-c \,\pi(\d y,\d z)\Bigg).
\end{align*}
Defining $ c(v):= \epsilon v^p-c$ for all $v\geq 0$, shows that Assumption \eqref{cond.P} is satisfied. For all $t>0$ and $\nu\in \Pc_t$, let $\pi_{t,\nu}^*$ be an optimal coupling in the definition of the Wasserstein $p$-distance \eqref{eq.def.wass.p}. Let $t >0$ and $m \in \R^{d}$.\ Then,
\begin{align*}
	\frac{g_t(m)}{t}&=\sup_{\nu \in \Pc_t}\Bigg(\int_{\R^d\times \R^d}\bigg\langle m,\frac{z-y}{t}\bigg\rangle\,\pi^*_{t,\nu}(\d y, \d z) - \phi\bigg(\frac{\Wc_p(\mu_t,\nu)}{t}\bigg)\Bigg)\\
    &
	\leq \sup_{\nu \in \Pc_t}\Bigg(|m| \frac{\Wc_p(\mu_t,\nu)}{t} - \phi\bigg(\frac{\Wc_p(\mu_t,\nu)}{t}\bigg)\Bigg)\leq  \phi^*(|m|).
\end{align*}
Since, by assumption, the map $[0,\infty)\to [0,\infty],\; v \mapsto \phi(v^{1/p})$ is convex, Jensen's inequality implies that	
\begin{equation}\label{ineq.wasserstein} 
	\phi\bigg(\frac{\Wc_p(\mu_t,\nu)}{t}\bigg)\leq \int_{\R^{d} \times \R^{d}} \phi\bigg(\frac{|y-z|}{t}\bigg)\, \pi^*_{t,\nu}(\d y,\d z)\quad \text{for all }\nu\in \Pc_t.
\end{equation}
Since the map $x\mapsto |x|^p$ is strictly convex, by Remark \ref{rem.ass.firstorder} c) and Jensen's inequality, for $t>0$ and $\nu=\mu_t\circ (\,\cdot+b)^{-1}$ with $b\in \R^d$, the unique optimal coupling for the Wasserstein $p$-distance is $\pi_{t,\nu}^*=\mu_t\circ (\,\cdot\,,\,\cdot+b)^{-1}$, so that that
\begin{align*}
	\frac{g_t(m)}{t}&\geq \sup_{b\in \R^d} \Bigg(\bigg\langle m,\frac{b}{t}\bigg\rangle- \phi\bigg(\frac{|b|}{t}\bigg)\Bigg)= \phi^*(|m|).
\end{align*}
This shows that $\frac{g_t(m)}t=\phi^*(|m|)$ for all $t>0$ and $m\in \R^d$.
\end{example}

\begin{example}[Drift control penalization]\label{ex.optimalcontrol.fop}
    Let $\phi\colon \R^d\to [0,\infty)$ be convex and lower semicontinuous with $\phi(0)=0$ and $\lim_{|u|\to \infty}\frac{\phi(u)}{|u|}=\infty$, $(\Omega,\mathcal F,\P)$ be a probability space, and $(Y_t)_{t\geq 0}$ be a family of random variables on $(\Omega,\mathcal F,\P)$ with $Y_t\sim \mu_t$ for all $t>0$.\ Let $\mathbb F=(\mathcal F_t)_{t\geq 0}$ be a filtration on $(\Omega,\mathcal F)$. Denote by $\mathcal A$ the set of all $\mathbb F$-progressively measurable $\R^d$-valued stochastic processes $\beta=(\beta_t)_{t\geq 0}$ with $\E\big(\int_0^t \phi(\beta_s)\, \d s\big)<\infty$ for all $t\geq 0$.\ For $t>0$ and $\beta\in \mathcal A$, let $\mu_t^\beta\in \Pc(\R^d)$ denote the law of $Y_t+\int_0^t\beta_s\, \d s$. For $t>0$ and $\nu\in \Pc(\R^d)$, let
    \[
        \alpha_t(\nu):=\inf \bigg\{ \E\bigg[\int_0^t \phi(\beta_s)\, \d s\bigg]\bigg|\, \beta\in \mathcal A,\, \mu_t^\beta=\nu\bigg\},
    \]
    where we use the convention $\inf \emptyset=\infty$.\ Then, $\Pc_t=\big\{\mu_t^\beta\,\big|\, \beta\in \mathcal A\big\}$ for all $t>0$
    and
    \begin{align*}
        \inf_{\pi\in \Cpl(\mu_t,\nu)}\int_{\R^d\times \R^d}t\phi\bigg(\frac{z-y}{t}\bigg)\, \pi(\d y,\d z)&\leq \inf\bigg\{ \E\bigg[t\phi\bigg(\frac1t \int_0^t \beta_s\, \d s\bigg)\bigg]\bigg|\, \beta\in \mathcal A,\, \mu_t^\beta=\nu\bigg\}\\
        & \leq \inf\bigg\{ \E\bigg[ \int_0^t \phi(\beta_s)\, \d s\bigg]\bigg|\, \beta\in \mathcal A,\, \mu_t^\beta=\nu\bigg\}\\
        &=\alpha_t(\nu)\quad\text{for all }t>0\text{ and }\nu\in \Pc_t,
    \end{align*}
    where the first inequality follows by choosing the coupling $\pi = \mathbb{P} \circ (Y_t, Y_t + \int_0^t \beta_s\,\mathrm{d}s)^{-1}$ for  $\beta\in \mathcal A$ with $\mu_t^\beta=\nu$ and the second inequality follows from Jensen's inequality.
    
    Choosing $c$ similarly as in Example \ref{ex.fop} in such a way that $c(v)<\infty$ for all $v\geq0$,\footnote{This can, for example, be achieved by replacing $\ph$ with $\tilde\ph(u):=\min\{\ph(u),|u|^2\}$ for $u\in \R^d$ in the definition of the function $c$ in Example \ref{ex.fop}.}  
    we have therefore shown that Assumption~\eqref{cond.P} is satisfied.\ Choosing the coupling $\pi^*_{t,\nu} := \mathbb{P} \circ (Y_t, Y_t +  b)^{-1}$ for $t>0$ and $\nu=\mathbb{P} \circ (Y_t +  b)^{-1}$ with $b\in \R^d$, it follows that
    \[
        \frac{g_t(m)}t\geq \sup_{b\in \R^d} \left(\frac{1}{t}\langle m,b\rangle -\phi\Big(\frac{b}{t}\Big)\right)=\phi^*(m).
    \]
Hence, using a similar estimate as in \eqref{eq.upperbound.g_h}, cf.\ Example \ref{ex.fop}, $\frac{g_t(m)}{t}=\phi^*(m)$ for all $t>0$ and $m\in \R^d$.
\end{example}

\section{Second-order perturbation}\label{sec:second order}

In this section, we consider the case of second-order penalizations.\ Again, we consider the family $(I_t)_{t>0}$, given by \eqref{eq.operatorI}.\ However, this time with a family $(\alpha_t)_{t > 0}$ of penalization functions $\Pc(\R^d)\to [0,\infty]$ satisfying the following new assumption.
\begin{enumerate}
    \item[(P')]\makeatletter\def\@currentlabel{P'}\makeatother\label{cond.Pprime} There exist $h_0\in(0,\infty)$ and a nondecreasing function $c\colon [0,\infty)\to \R$ with $$\lim_{v\to \infty}\frac{c(v)}{v}=\infty$$ such that
    \[
        \alpha_h(\nu)\geq \inf_{\pi\in \mart(\mu_h,\nu)}\int_{\R^d\times \R^d}hc\bigg(\frac{|y-z|^2}{2h}\bigg)\,\pi(\d y,\d z)\quad\text{for all }h\in(0,h_0)\text{ and }\nu\in \Pc(\R^d),
    \]
    where we use again the convention $\inf \emptyset=\infty$.\ Moreover, $\alpha_h(\mu_h)=0$ for all $h\in (0,h_0)$.
\end{enumerate}	

Replacing again $c$ by $c^{**}\leq c$, see Lemma \ref{lem.conjugate} b), we may w.l.o.g.\ assume that the function $c$ is convex, so that $c$ is continuous and $(0,\infty)\to \R,\; v\mapsto \frac{c(v)}{v}$ is nondecreasing since, by Assumption \eqref{cond.Pprime}, $c(0)\leq 0$.\footnote{Since $c$ is nondecreasing,
     $c(0)\leq \inf_{\pi\in \mart(\mu_h,\mu_h)}\int_{\R^d\times \R^d} h c\big(\frac{|z-y|^2}{2h}\big)\,\pi(\d y,\d z)\leq \alpha_h(\mu_h)=0$ for $h\in (0,h_0)$.}\ 
     We define the convex conjugate of $c$ by
\[
c^\ast(w):=\sup_{v\ge 0}\big(vw-c(v)\big) \quad\text{for all } w\ge 0.
\]
Then, by Lemma \ref{lem.conjugate} a), the supremum is again finite for every $w\ge 0$, so that $c^\ast(w)\in [0,\infty)$ for all $w\geq 0$.

    Again, probability measures $\nu\in \Pc(\R^d)$ with $\alpha_t(\nu)=\infty$ do not affect the value
of the supremum in \eqref{eq.operatorI}, so that the supremum may equivalently
be taken over the effective domain
\[
\Pc_t := \big\{\nu\in\Pc(\R^d)\,\big|\,\alpha_t(\nu)<\infty\big\}\quad \text{for all }t>0.
\]
This convention will be used throughout this section.\ In particular, $\mart(\mu_h,\nu)\neq \emptyset$ for all $h\in (0,h_0)$ and $\nu\in \Pc_h$.\ We start with a series of observations.

\begin{remark}\label{rem.ass.secondorder}\
Assume that Condition \eqref{cond.Pprime} is satisfied.
\begin{enumerate}  
    \item[a)] Let $f\colon \R^d\to \R$ be differentiable with Lipschitz continuous gradient $\nabla f\colon \R^d\to \R^d$ and $h\in (0,h_0)$.\ Then, the assumptions on the family $(\mu_t)_{t>0}$ do, in general, neither ensure that $f$ is $\mu_t$-integrable for $t>0$ nor that
    \[
        \int_{\R^d\times \R^d}\big|\langle \nabla f(y),z-y\rangle \big|\,\pi(\d y,\d z)<\infty
    \]
    for $\pi\in \mart(\mu_t,\nu)$ and $t>0$.\ However, by the fundamental theorem of calculus,
    \begin{align}
        \notag \big|f(z)-f(y)-\langle \nabla f(y),z-y\rangle\big|&=\bigg|\int_0^1 \big\langle\nabla f\big(y+s(z-y)\big)-\nabla f(y),z-y\big\rangle\, \d s\bigg|\\
        &\leq \|f\|_{\Lip,1}\frac{|z-y|^2}{2}\quad \text{for all }y,z\in \R^d,\label{eq.bound.fund.calc}
    \end{align}
    where
    \begin{equation}
        \label{eq.norm.lip.1}
        \|f\|_{\Lip,1}:=\sup_{\substack{y,z\in \R^d\\ y\neq z}}\sup_{s\in (0,1]}\frac{\big|\langle \nabla f\big(y+s(z-y)\big)-\nabla f(y),z-y\rangle\big| }{s|z-y|^2}\leq \|\nabla f\|_{\Lip}
    \end{equation}
    Hence, for all $\nu \in \mathcal{P}_h$ and any martingale coupling $\pi \in \mart(\mu_h,\nu)$ with
       \begin{equation}\label{eq.finite.cost2}
         \int_{\R^d\times \R^d} hc\bigg(\frac{|z-y|^2}{2h}\bigg)\,\pi(\d y,\d z)<\infty,
        \end{equation}
        we find that
        \begin{align}
          \notag &\int_{\R^d\times \R^d} \big|f(z)-f(y)-\langle \nabla f(y),z-y\rangle\big|\,\pi(\d y,\d z)-\int_{\R^d\times \R^d} hc\bigg(\frac{|z-y|^2}{2h}\bigg)\,\pi(\d y,\d z)\\
           &\qquad\quad\leq h\int_{\R^d\times \R^d} \|f\|_{\Lip,1}\frac{|z-y|^2}{2h}-c\bigg(\frac{|z-y|^2}{2h}\bigg)\,\pi(\d y,\d z)\leq h c^*(\|f\|_{\Lip,1})<\infty, \label{eq.mart.wass.finite}
        \end{align}
        so that the mapping $$\R^d\times \R^d \to \R,\quad (y,z)\mapsto f(z)-f(y)-\langle \nabla f(y),z-y\rangle$$ is $\pi$-integrable.\ Choosing $f(x)=\frac12|x|^2$ for $x\in \R^d$, it follows that
        \begin{equation}\label{eq.apriori.mart.coupling}
         \int_{\R^d\times \R^d} \frac{|z-y|^2}{2}\, \pi(\d y,\d z)\leq hc^*(1)+\int_{\R^d\times \R^d}hc\bigg(\frac{|z-y|^2}{2h}\bigg)\,\pi(\d y,\d z)<\infty
        \end{equation}
        for all $\nu\in \Pc_h$ and any martingale coupling $\pi\in  \mart(\mu_h,\nu)$ with \eqref{eq.finite.cost2}.\
        Moreover, by \eqref{eq.mart.wass.finite}, for all $\nu\in \Pc_h$ and any martingale coupling $\pi\in  \mart(\mu_h,\nu)$ with 
        \begin{equation}\label{eq.almost.mart.opt.coupling}
   \int_{\R^d\times \R^d} hc\bigg(\frac{|z-y|^2}{2h}\bigg)\, \pi(\d y,\d z)\leq \al_h(\nu),
\end{equation}
       it holds
\begin{equation}\label{eq.apriori.lipschitz.mart}
         \int_{\R^d\times \R^d} \big|f(z)-f(y)-\langle \nabla f(y),z-y\rangle\big|\,\pi(\d y, \d z )-\alpha_h(\nu)\leq hc^*\big(\|f\|_{\Lip,1}\big).
          \end{equation}
\item[b)] Since $c$ is continuous and nondecreasing, by \eqref{eq.apriori.mart.coupling}, the de la Vall\'ee Poussin Lemma \cite[Theorem 4.5.9]{Boga2007}, the Lebesgue-Vitali Theorem \cite[Theorem 4.5.4]{Boga2007}, and \cite[Lemma 4.4 and Definition 6.8]{villani2008optimal}, for all $h\in (0,h_0)$ and $\nu\in \Pc_h$, there exists an optimal martingale coupling $\pi^*_{h,\nu}\in \mart(\mu_h,\nu)$ with 
    \begin{align}
        \notag \int_{\R^d\times \R^d} hc\bigg(\frac{|z-y|^2}{2h}\bigg)\,\pi^*_{h,\nu}(\d y,\d z)&=\inf_{\pi\in \mart(\mu_h,\nu)}\int_{\R^d\times \R^d} hc\bigg(\frac{|z-y|^2}{2h}\bigg)\,\pi(\d y,\d z)\\
        &\leq \alpha_h(\nu)<\infty. \label{eq.mart.opt.coupling}
        \end{align}  
        Hence, by \eqref{eq.def.mart.coupling} and \eqref{eq.apriori.lipschitz.mart}, for any bounded differentiable function $f\colon \R^d\to \R$ with bounded Lipschitz continuous gradient $\nabla f\colon \R^d\to \R^d$ and $h\in (0,h_0)$,
          \begin{align}
  \notag        0&\leq (I_h f)(x)-(P_h f)(x)\\
\notag &\qquad \quad =
\sup_{\nu\in\Pc_h}
\bigg(
\inf_{\pi\in\mart(\mu_h,\nu)}
\int_{\R^d\times \R^d} f\big(\psi_h(x)+z\big)-f\big(\psi_h(x)+y\big)\,\pi(\d y,\d z)
-\alpha_h(\nu)
\bigg) \label{eq.apriori.mart.lipschitz.thm}\\
&\qquad\quad \le hc^*\big(\| f\|_{\Lip,1}\big) \quad \text{for all }x\in \R^d.
\end{align}
\item[c)] Let $f\colon \R^d\to \R$ be differentiable with Lipschitz continuous gradient $\nabla f\colon \R^d\to \R^d$, $h\in (0,h_0)$, and  $\nu\in \Pc_h$. Then, for all $\pi\in\mart(\mu_h,\nu)$ with
\begin{equation}\label{eq.finite.second.moment}
\int_{\R^d\times\R^d} |z-y|^2\,\pi(\d y,\d z)<\infty,
\end{equation}
the integral
\begin{equation}\label{eq.int.secondorder}
\int_{\R^d\times\R^d} f(z)-f(y)-\langle \nabla f(y), z-y\rangle \,\pi(\d y,\d z)
\end{equation}
is well-defined and independent of the choice of $\pi\in\mart(\mu_h,\nu)$ among all martingale couplings satisfying the integrability condition \eqref{eq.finite.second.moment}.\ Indeed, by \eqref{eq.bound.fund.calc}, the integral in \eqref{eq.int.secondorder} is well-defined for all $\pi\in\mart(\mu_h,\nu)$ with \eqref{eq.finite.second.moment}. Now, let $\pi_1,\pi_2\in\mart(\mu_h,\nu)$ with \eqref{eq.finite.second.moment} and $\ph\in \Ccinf(\R^d)$ with $\ph\geq 0$ and $\int_{\R^d} \ph(u)\,\d u=1$. Then, for all $u\in \R^d$,
\begin{align*}
 \int_{\R^d\times \R^d}& \ph(z+u)-\ph(y+u)-\langle \nabla \ph(y+u),z-y\rangle \,\pi_1(\d y,\d z)\\
 &=\int_{\R^d\times \R^d} \ph(z+u)-\ph(y+u)\,\pi_1(\d y,\d z)\\
&=\int_{\R^d} \ph(z+u)\,\nu(\d z)-\int_{\R^d} \ph(y+u)\,\mu_h(\d y)\\
&=\int_{\R^d\times \R^d} \ph(z+u)-\ph(y+u)\,\pi_2(\d y,\d z)\\
&=\int_{\R^d\times \R^d} \ph(z+u)-\ph(y+u)-\langle \nabla \ph(y+u),z-y\rangle \,\pi_2(\d y,\d z).
\end{align*}
Since 
\begin{align*}
 \int_{\R^d\times \R^d}\int_{\R^d} \ph(u)&\big|f(z+u)-f(y+u)-\langle \nabla f(y+u),z-y\rangle\big| \,\d u\,\pi_i(\d y,\d z)\\
 &\leq \|f\|_{\Lip,1} \int_{\R^d\times\R^d} \frac{|z-y|^2}2 \,\pi_i(\d y,\d z)<\infty\quad \text{for }i=1,2,
\end{align*}
by the Fubini-Tonelli theorem,
\begin{align*}
 \int_{\R^d\times \R^d}& (f\ast\ph)(z)-(f\ast\ph)(y)-\big\langle \nabla (f\ast \ph)(y),z-y\big\rangle \,\pi_1(\d y,\d z)\\
 &=\int_{\R^d} f(u)\int_{\R^d\times \R^d} \ph(z+u)-\ph(y+u)-\langle \nabla \ph(y+u),z-y\rangle \,\pi_1(\d y,\d z)\,\d u\\
 &=\int_{\R^d} f(u)\int_{\R^d\times \R^d} \ph(z+u)-\ph(y+u)-\langle \nabla \ph(y+u),z-y\rangle \,\pi_2(\d y,\d z)\,\d u\\
 &= \int_{\R^d\times \R^d} (f\ast\ph)(z)-(f\ast\ph)(y)-\big\langle \nabla (f\ast \ph)(y),z-y\big\rangle \,\pi_2(\d y,\d z).
\end{align*}
Since $\|f\ast \ph\|_{\Lip,1}\leq \| f\|_{\Lip,1}$, using the dominated convergence theorem and Friedrichs mollifier,
the claim follows.
\end{enumerate}
\end{remark}

Again, we aim to identify the second-order limit of the rescaled nonlinear perturbations.
As in the previous section, for each $h\in (0,h_0)$, we introduce an auxiliary function $G_h$ which captures the
maximal local covariance displacement induced by perturbations of the
reference measure $\mu_h$, penalized by the transport cost $\alpha_h$.

Throughout, let $\sym$ denote the set of all symmetric matrices $A\in\R^{d\times d}$ endowed with the norm
    \begin{equation}\label{eq.matrixnorm}
 	    |A|:= \sup_{|x| =1} \big|\langle Ax, x \rangle\big|\quad\text{for }A\in \sym.
    \end{equation}
    Throughout the remainder of this section, for all $h\in (0,h_0)$ and all $\nu\in \Pc_h$, we fix an arbitrary martingale coupling $\pi_{h,\nu}^*\in \mart(\mu_h,\nu)$ with \eqref{eq.finite.second.moment} and consider the map $G_h\colon \sym\to \R$, given by 
    \begin{equation}\label{eq.def.G_h}
        G_h(A):=\sup_{\nu\in \Pc_h}\bigg(\int_{\R^d\times \R^d} \frac12\big\langle A(z-y),z-y\big\rangle\, \pi_{h,\nu}^*(\d y,\d z)-\alpha_h(\nu)\bigg)\quad \text{for all }A\in \sym.
    \end{equation}
Choosing $f(x)=\frac12\langle Ax,x\rangle$ for $x\in \R^d$ with $A\in \sym$, by Remark \ref{rem.ass.secondorder} c), the function $G_h$ is well-defined and the definition of $G_h$ is independent of the choice of the martingale coupling $\pi_{h,\nu}^*\in \mart(\mu_h,\nu)$ among all couplings satisfying the integrability condition \eqref{eq.finite.second.moment} for all $h\in (0,h_0)$ and $\nu\in \Pc_h$.\ In particular, we may assume w.l.o.g.\ that 
\[
   \int_{\R^d\times \R^d} hc\bigg(\frac{|z-y|^2}{2h}\bigg)\, \pi_{h,\nu}^*(\d y,\d z)\leq \al_h(\nu),
\]
see Remark \ref{rem.ass.secondorder} a), i.e., $\pi_{h,\nu}^*$ satisfies \eqref{eq.almost.mart.opt.coupling} in Remark \ref{rem.ass.secondorder} b) for all $h\in (0,h_0)$ and $\nu\in \Pc_h$.

The following auxiliary result will again play a crucial role in the proof of the main theorem of this section.

\begin{lemma}\label{lem.apriori2} Assume that Condition \eqref{cond.Pprime} is satisfied.\
 Then, for all $L\geq 0$, there exists some $a\geq 0$ such that, for all $h\in (0,h_0)$ and all differentiable functions $f\colon \R^d\to \R$ with $\|\nabla f\|_{\Lip}\leq L$,
 \begin{align}
  \notag\sup_{\nu\in \Pc_h}\bigg(&\int_{\R^d\times \R^d} f(z)-f(y)-\big\langle \nabla f(y),z-y\big\rangle\,\pi_{h,\nu}^*(\d y, \d z )-\alpha_h(\nu)\bigg)\\
  &=\sup_{\nu\in \Pc_h^a}\bigg(\int_{\R^d\times \R^d} f(z)-f(y)-\big\langle \nabla f(y),z-y\big\rangle\,\pi_{h,\nu}^*(\d y, \d z )-\alpha_h(\nu)\bigg),\label{eq.lem.apriori2}
 \end{align}
 where, for $h\in (0,h_0)$ and $a\geq 0$, the set $\Pc_h^a$ consists of all $\nu\in \Pc_h$ with
 \[
 \int_{\R^d\times \R^d}c\bigg(\frac{|y-z|^2}{2h}\bigg)\, \pi_{h,\nu}^*(\d y,\d z)\leq a.
 \]
\end{lemma}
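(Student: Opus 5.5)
The proof will follow that of Lemma \ref{lem.apriori1}, with the first-order Lipschitz bound replaced by the second-order Taylor bound \eqref{eq.bound.fund.calc}.

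Fix $L\ge0$. By Assumption \eqref{cond.Pprime}, choose $\gamma\ge0$ such that $c(v)>1+Lv$ for all $v>\gamma$.

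Let $h\in(0,h_0)$ and let $f$ be differentiable with $\|\nabla f\|_{\Lip}\le L$. Then $\|f\|_{\Lip,1}\le L$ by \eqref{eq.norm.lip.1}. By Remark \ref{rem.ass.secondorder} c) we may take $\pi^*_{h,\mu_h}=\mu_h\circ(y,y)^{-1}$. This coupling lies in $\mart(\mu_h,\mu_h)$ and has finite second moment of $z-y$. Since $\alpha_h(\mu_h)=0$, the integrand vanishes at $\nu=\mu_h$, so the supremum on the left-hand side of \eqref{eq.lem.apriori2} is $\ge0$. It is also finite, being at most $hc^*(L)$ by \eqref{eq.apriori.lipschitz.mart}.

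Now consider any $h$-optimizer $\nu_h\in\Pc_h$, meaning the value at $\nu_h$ is within $h$ of the supremum. Using \eqref{eq.bound.fund.calc}, we get
\[
0\le h+L\int\frac{|z-y|^2}{2}\,\pi^*_{h,\nu_h}(\d y,\d z)-\alpha_h(\nu_h).
\]
Combine this with $\alpha_h(\nu_h)\ge \int hc\big(\frac{|z-y|^2}{2h}\big)\,\d\pi^*_{h,\nu_h}$, which holds by the normalization \eqref{eq.almost.mart.opt.coupling} of $\pi^*_{h,\nu}$, and with Jensen's inequality for the convex function $c$. Writing $v:=\int\frac{|z-y|^2}{2h}\,\d\pi^*_{h,\nu_h}$, this gives
\[
c(v)\le \int c\Big(\frac{|z-y|^2}{2h}\Big)\,\d\pi^*_{h,\nu_h}\le \frac{\alpha_h(\nu_h)}{h}\le 1+Lv.
\]
Here $v<\infty$ by \eqref{eq.apriori.mart.coupling}. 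The choice of $\gamma$ then forces $v\le\gamma$. Hence $\int c\big(\frac{|z-y|^2}{2h}\big)\,\d\pi^*_{h,\nu_h}\le 1+L\gamma=:a$, which means $\nu_h\in\Pc_h^a$. Note that $a$ depends only on $L$ and $c$.

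The inequality ``$\ge$'' in \eqref{eq.lem.apriori2} is trivial because $\Pc_h^a\subset\Pc_h$. For ``$\le$'', an $h$-optimizer lies in $\Pc_h^a$, so the right-hand side is at least the left-hand side minus $h$. To remove the $h$, run the same argument with $\varepsilon$-optimizers for $\varepsilon\in(0,h]$. The bound becomes $c(v)\le \varepsilon/h+Lv\le 1+Lv$, so every $\varepsilon$-optimizer lies in the same $\Pc_h^a$, and letting $\varepsilon\downarrow0$ gives equality. The only point that needs care is the integrability of $f(z)-f(y)-\langle\nabla f(y),z-y\rangle$ and the finiteness of $v$. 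Both are supplied by Remark \ref{rem.ass.secondorder} a), specifically \eqref{eq.mart.wass.finite} and \eqref{eq.apriori.mart.coupling}.
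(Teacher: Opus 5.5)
Your proof is correct and follows the paper's argument essentially step for step. You choose $\gamma$ with $c(v)>1+Lv$ for $v>\gamma$, bound the integrand at a near-optimizer by $L|z-y|^2/2$, and then combine Jensen's inequality with the lower bound $\alpha_h(\nu)\ge\int hc\big(\tfrac{|z-y|^2}{2h}\big)\,\d\pi^*_{h,\nu}$ to force $v\le\gamma$, which gives $a=1+L\gamma$. You also spell out two points the paper leaves implicit: that the supremum is finite, and that every $\varepsilon$-optimizer with $\varepsilon\le h$ lies in the same $\Pc_h^a$, so the two suprema are exactly equal rather than merely within $h$ of each other.
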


\begin{proof}
Let $L\geq 0$.\ Then, by assumption on $c$, there exists some $\gamma\geq 0$ such that
\begin{equation}\label{eq.lem.apriori2.proof}
    c(v)>1+Lv \quad\text{for all }v \in (\gamma,\infty).
\end{equation}
Now, let $h\in (0,h_0)$ and $f\colon\R^d\to \R$ differentiable with $\|\nabla f\|_{\Lip}\leq L$.\ Since $\mu_h\in \Pc_h$ with $\alpha_h(\mu_h)=0$, there exists some $\nu\in \Pc_h$ with
\begin{align*}
    0&\leq \sup_{\nu\in \Pc_h}\bigg(\int_{\R^d\times \R^d} f(z)-f(y)-\big\langle \nabla f(y),z-y\big\rangle\,\pi_{h,\nu}^*(\d y, \d z )-\alpha_h(\nu)\bigg)\\
    &\leq h+\int_{\R^d\times \R^d} f(z)-f(y)-\big\langle \nabla f(y),z-y\big\rangle\,\pi_{h,\nu}^*(\d y, \d z )-\alpha_h(\nu)\\
    &\leq h+L\int_{\R^d\times \R^d}\frac{|z-y|^2}2\, \pi_{h,\nu}^*(\d y,\d z)-\alpha_h(\nu).
\end{align*}
Using Jensen's inequality and Assumption \eqref{cond.Pprime}, we thus find that
\begin{align*}
    c\bigg(\int_{\R^d\times \R^d}\frac{|z-y|^2}{2h}\, \pi_{h,\nu}^*(\d y,\d z)\bigg)&\leq \int_{\R^d\times \R^d}c\bigg(\frac{|y-z|^2}{2h}\bigg)\, \,\pi_{h,\nu}^*(\d y,\d z)\\
    &\leq \frac{\alpha_h(\nu)}{h}\leq 1+L\int_{\R^d\times \R^d}\frac{|z-y|^2}{2h}\, \pi_{h,\nu}^*(\d y,\d z),
\end{align*}
which, by \eqref{eq.lem.apriori2.proof}, implies that $\int_{\R^d\times \R^d}\frac{|z-y|^2}{2h}\, \pi_{h,\nu}^*(\d y,\d z)\leq \gamma$.\ Hence,
\[
\int_{\R^d\times \R^d} c\bigg(\frac{|z-y|^2}{2h}\bigg)\, \pi_{h,\nu}^*(\d y,\d z)\leq 1+L\int_{\R^d\times \R^d}\frac{|z-y|^2}{2h}\, \pi_{h,\nu}^*(\d y,\d z)\leq 1+L\gamma =:a.
\]
The proof is complete.
\end{proof}

Again, the previous lemma can be understood as a compactness result, which allows to extract a limiting convex function $G$, which is nondecreasing, i.e.,
\[
 G(A_1)\leq G(A_2)\quad \text{for all }A_1,A_2\in \sym\text{ s.t.\ }A_2-A_1\text{ is positive semidefinite,} 
\]
as the following proposition shows.

\begin{proposition}\label{prop.derivative.G_h}
     Assume that Condition \eqref{cond.Pprime} is satisfied.\ Then, for every null sequence in $(0,h_0)$, there exists a subsequence $(h_n)_{n\in \N}$ and a convex nondecreasing function $G\colon \sym\to \R$ with $G(0)=0\leq G(A)\leq c^*(|A|)$ for all $A\in \sym$ such that
	\begin{equation}\label{eq.existence.G}
        \sup_{|A|\leq L}\bigg|G(A)-\frac{G_{h_n}(A)}{h_n}\bigg|\to 0\quad\text{as }n\to\infty\text { for all } L\geq 0,
    \end{equation}
\end{proposition}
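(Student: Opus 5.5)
The plan is to follow the proof of Proposition \ref{prop.derivative.g_h} step by step, with Lemma \ref{lem.apriori2} taking the place of Lemma \ref{lem.apriori1}. The argument has three parts: uniform bounds and structural properties of $G_h/h$, a uniform modulus of continuity on bounded sets, and Arzel\`a--Ascoli in the finite-dimensional space $\sym$.

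\emph{Bounds and structure.} Fix $h\in(0,h_0)$ and $A\in\sym$. Since $\langle A(z-y),z-y\rangle\le |A|\,|z-y|^2$, we have
\[
\tfrac12\langle A(z-y),z-y\rangle\le h\Big(|A|\tfrac{|z-y|^2}{2h}-c\big(\tfrac{|z-y|^2}{2h}\big)\Big)+hc\big(\tfrac{|z-y|^2}{2h}\big).
\]
Integrating against $\pi^*_{h,\nu}$, which satisfies \eqref{eq.almost.mart.opt.coupling}, gives $G_h(A)\le hc^*(|A|)$.

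For the lower bound, take $\nu=\mu_h$ and the diagonal coupling $\mu_h\circ(y,y)^{-1}\in\mart(\mu_h,\mu_h)$. This is admissible by Remark \ref{rem.ass.secondorder} c), and together with $\alpha_h(\mu_h)=0$ it gives $G_h(A)\ge0$.

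Convexity follows because $G_h$ is a supremum of functions that are affine in $A$. The same observation gives $G_h(0)=0$. Monotonicity holds because if $A_2-A_1$ is positive semidefinite, then each integrand for $A_2$ dominates the corresponding one for $A_1$.

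\emph{Equicontinuity.} Fix $L\ge0$ and $\ep>0$. I will apply Lemma \ref{lem.apriori2} with $f(x)=\frac12\langle Ax,x\rangle$ for $|A|\le L$. Here $\|\nabla f\|_{\Lip}=|A|\le L$, and the integrand in \eqref{eq.lem.apriori2} is exactly $\frac12\langle A(z-y),z-y\rangle$. This yields an $a\ge0$, independent of $h$ and $A$, such that the supremum defining $G_h(A)$ may be restricted to $\Pc_h^a$.

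Next choose $r>0$ with $c(r)>0$ and $\frac{4Lar}{c(r)}\le\ep$. Let $|A_1|,|A_2|\le L$ with $2r|A_1-A_2|\le\ep$. Writing $v=\frac{|z-y|^2}{2h}$, we get
\[
\Big|\tfrac{G_h(A_1)-G_h(A_2)}{h}\Big|\le\sup_{\nu\in\Pc_h^a}\int\big|\langle(A_1-A_2)(z-y),z-y\rangle\big|\tfrac{1}{2h}\,\d\pi^*_{h,\nu}.
\]
On $\{v\le r\}$ the integrand is at most $r|A_1-A_2|\le\ep/2$.

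The set $\{v>r\}$ is where the work lies. Here I bound $|A_1-A_2|\le 2L$ and use that $v\mapsto c(v)/v$ is nondecreasing, so $v\le \frac{r}{c(r)}c(v)$ on $\{v>r\}$. Since $c\ge c(0)\ge -c^*(0)$ is bounded below, the contribution of the negative part of $c$ is controlled, and the integral over $\{v>r\}$ is at most $\frac{2Lr}{c(r)}\big(a+\text{const}\big)$. This step is the main obstacle. It uses the integrability of $c(v)$ under $\pi^*_{h,\nu}$ for $\nu\in\Pc_h^a$, which is exactly the content of Lemma \ref{lem.apriori2}. In particular it requires $a$ to be uniform in $h$, which is why the cost scaling $hc(|z-y|^2/2h)$ is essential. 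Possibly after enlarging $r$ to absorb the constant, the total is at most $\ep$.

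\emph{Conclusion.} The family $(G_h/h)_{h\in(0,h_0)}$ is therefore uniformly bounded by $c^*(L)$ and equicontinuous on $\{|A|\le L\}$, which is compact since $\sym$ is finite-dimensional. For a given null sequence, Arzel\`a--Ascoli on each ball, combined with a diagonal argument over $L\in\N$, produces a subsequence $(h_n)$ and a continuous limit $G$ satisfying \eqref{eq.existence.G}. Convexity, monotonicity in the Loewner order, $G(0)=0$, and $0\le G\le c^*(|\cdot|)$ all pass to the pointwise limit.
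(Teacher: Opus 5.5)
Your proposal is correct and follows the paper's proof essentially step for step. It uses the same bounds $0\le G_h(A)\le hc^*(|A|)$ via the diagonal coupling and $c^*$, the same reduction to $\Pc_h^a$ via Lemma \ref{lem.apriori2} applied to $f(x)=\frac12\langle Ax,x\rangle$, the same split at $\{v\le r\}$ with $v=\frac{|z-y|^2}{2h}$ using that $v\mapsto c(v)/v$ is nondecreasing, and Arzel\`a--Ascoli with a diagonal argument. Your handling of the possibly negative values of $c$ on $\{v\le r\}$, absorbing $|c(0)|$ by enlarging $r$, is slightly more careful than the paper, which bounds $\int_{\{v>r\}}c(v)\,\d\pi^*_{h,\nu}$ directly by $\int c(v)\,\d\pi^*_{h,\nu}\le a$.
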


\begin{proof}
By the definition of the norm on $\sym$, the definition of $c^*$, and Assumption \eqref{cond.Pprime}, for all $h\in (0,h_0)$ and $A\in \sym$,
\begin{equation}\label{eq.upperbound.G_h}
G_{h}(A)\leq h c^*(|A|)+\sup_{\nu \in \Pc_h} \Bigg(\int_{\R^{d}\times \R^{d}} hc\bigg(\frac{|z-y|^2}{2h}\bigg)\,\pi^*_{h,\nu}(\d y,\d z) - \alpha_h(\nu)\Bigg)\leq hc^*(|A|)
\end{equation}
and, choosing $\pi^*_{h,\mu_h}=\mu_h\circ (y,y)^{-1}$ as the martingale coupling between $\mu_h$ and itself,
\[
G_{h}(A)\geq \int_{\R^d\times \R^d} \frac12\langle A(z-y),z-y\rangle\,\pi^*_{h,\mu_h}(\d y,\d z)-\alpha_h(\mu_h)=0.
\]
Moreover, one readily verifies that $G_h\colon \sym\to \R$ is convex and nondecreasing with $G_h(0)=0$ for all $h\in (0,h_0)$.

Now, let $L\geq0$, $\ep>0$, and $a\geq0$ as in Lemma \ref{lem.apriori2}.\ Then, by Assumption \eqref{cond.Pprime}, there exists some $r>0$ such that $c(r)>0$ and $\frac{4Lar}{c(r)}\leq \ep$. Now, let $h\in (0,h_0)$ and $A_1,A_2\in \sym$ with $|A_i|\leq L$ for $i=1,2$ and $2r|A_1-A_2|\leq \ep$.\ Then, 
\begin{align*}
\bigg|\frac{G_{h}(A_1)}h-\frac{G_{h}(A_2)}h\bigg|&\leq \sup_{\nu \in \Pc_h^a}\int_{\R^{d}\times \R^{d}} |A_1-A_2|\frac{|z-y|^2}{2h}\,\pi_{h,\nu}^*(\d y,\d z)\\
&\leq {r|A_1-A_2|}+2L \int_{\big\{\frac{|z-y|^2}{2h}>r\big\}} \frac{|z-y|^2}{2h}\,\pi_{h,\nu}^*(\d y,\d z)\\
& \leq \frac{\ep}2+ \frac{2Lr}{c(r)} \int_{\R^{d}\times \R^{d}}c\bigg(\frac{|z-y|^2}{2h}\bigg)\,\pi_{h,\nu}^*(\d y,\d z)\leq \ep.
\end{align*}
The statement now follows from the Arzel\`a-Ascoli theorem together with a diagonal argument and the observation that convexity and monotonicity carry over from $G_h$ to the limiting functional $G$.
\end{proof}

The following theorem is the main result of this section.\ We denote by $ \Cb^{2}(\R^{d}) $ the space of all $ f \in \Cb^1(\R^{d})$ with bounded and continuous second derivative $\nabla^2 f\colon \R^d\to \sym$.	

\begin{theorem}\label{thm.generator2}\
Assume that conditions \eqref{cond.A} and \eqref{cond.Pprime} are satisfied.
\begin{enumerate}
\item[a)]
For all $R\geq 0$ and $f\in \Cb^2(\R^d)$,
\[
    \lim_{h\downarrow 0}\sup_{|x|\leq R}\bigg|\frac{(I_{h}f)(x)-(P_{h}f)(x)-G_{h}\big(\nabla^2 f(x)\big)}{h}\bigg|=0.
\]
 \item[b)] Let $(h_n)_{n\in \N}\subset (0,\infty)$  be a null sequence and $G\colon \sym\to \R$ such that \eqref{eq.existence.G} holds.\ Then, for all $ f \in \Cb^2(\R^{d})$,
 \begin{equation}\label{eq.thmgenerator2}
	\frac{I_{h_n}f - P_{h_n}f}{h_n} \to G(\nabla^2 f)\quad\text{as }n\to\infty.
 \end{equation}
\end{enumerate}
\end{theorem}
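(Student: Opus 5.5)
The plan mirrors the proof of Theorem~\ref{thm.generator}, with first-order Taylor expansion replaced by second order and the martingale property used to kill the linear term. Fix $f\in\Cb^2(\R^d)$ and set $L:=\|\nabla^2 f\|_\infty$, so that $\|\nabla f\|_{\Lip}\le L$ and $\|f\|_{\Lip,1}\le L$. By \eqref{eq.apriori.mart.lipschitz.thm},
\[
\sup_{h\in(0,h_0)}\frac{\|I_hf-P_hf\|_\infty}{h}\le c^*(L).
\]
Together with the uniform convergence \eqref{eq.existence.G} on $\{|A|\le L\}$ and the continuity of $G$, this reduces part b) to part a).

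For part a), fix $x$ and apply Lemma~\ref{lem.apriori2} to the translated function $f_x:=f(\psi_h(x)+\cdot\,)$, whose gradient also has Lipschitz constant at most $L$; let $a$ be the constant it provides. For $\nu\in\Pc_h$, the martingale property \eqref{eq.def.mart.coupling} applied to the bounded continuous field $\varphi=\nabla f_x$ gives
\[
\int f(\psi_h(x)+z)\,\nu(\d z)-(P_hf)(x)=\int f_x(z)-f_x(y)-\langle\nabla f_x(y),z-y\rangle\,\pi^*_{h,\nu}(\d y,\d z).
\]
Here $\pi^*_{h,\nu}$ has finite second moment, so the integral is well defined. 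Consequently $(I_hf)(x)-(P_hf)(x)$ is exactly the left-hand side of \eqref{eq.lem.apriori2}, and the supremum may be restricted to $\Pc_h^a$. The same restriction applies to $G_h(\nabla^2 f(x))$, by Lemma~\ref{lem.apriori2} with the quadratic $\tfrac12\langle\nabla^2 f(x)\,\cdot,\cdot\rangle$. Hence the quantity to control is bounded by
\[
\sup_{\nu\in\Pc_h^a}\frac1h\int \Bigl|f_x(z)-f_x(y)-\langle\nabla f_x(y),z-y\rangle-\tfrac12\langle\nabla^2 f(x)(z-y),z-y\rangle\Bigr|\,\pi^*_{h,\nu}.
\]
By Taylor's formula with integral remainder, the integrand is at most
\[
\int_0^1(1-s)\,\bigl|\nabla^2 f(\psi_h(x)+y+s(z-y))-\nabla^2 f(x)\bigr|\,|z-y|^2\,\d s,
\]
so the integral is dominated by $\int\sup_s|\cdots|\,\frac{|z-y|^2}{2h}\,\d\pi^*_{h,\nu}$.

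Split this integral into three regions, exactly as in Theorem~\ref{thm.generator}, with the threshold now placed on $\frac{|z-y|^2}{2h}$.
\begin{enumerate}
\item[(i)] On $\{|y|\le\delta,\ \frac{|z-y|^2}{2h}\le r\}$ we have $|z-y|\le\sqrt{2hr}\le\delta$ for small $h$. Uniform continuity of $\nabla^2 f$ on compacts together with \eqref{eq.ass.psi_h} makes the Hessian difference at most $\ep/(3r)$, so this part is at most $\ep/3$.
\item[(ii)] On $\{|y|>\delta,\ \frac{|z-y|^2}{2h}\le r\}$ the bound is $2Lr\,\mu_h(\{|y|>\delta\})\le\ep/3$ for small $h$, by (A).
\item[(iii)] On $\{\frac{|z-y|^2}{2h}>r\}$ we use that $v\mapsto c(v)/v$ is nondecreasing to bound the contribution by $\frac{2Lr}{c(r)}\,a\le\ep/3$ once $r$ is chosen large.
\end{enumerate}
The order of choices is: $r$ first, then $\delta$, then $\bar h$.

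The main obstacle is justifying the identity that removes the gradient term. The martingale condition \eqref{eq.def.mart.coupling} needs $\int|\langle\nabla f_x(y),z-y\rangle|\,\d\pi<\infty$, which holds because $\nabla f$ is bounded and the coupling has finite second (hence first) moment. One must also check that Lemma~\ref{lem.apriori2} and the choice of $a$ are uniform in $x$ and $h$, which they are since $a$ depends only on $L$.
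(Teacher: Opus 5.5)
Your proposal is correct and follows the paper's proof essentially step by step. The a priori bound $c^*(\|\nabla^2 f\|_\infty)$ reduces b) to a); Lemma~\ref{lem.apriori2}, with the same $a$ for $f(\psi_h(x)+\cdot\,)$ and for the quadratic form $\tfrac12\langle\nabla^2 f(x)\,\cdot,\cdot\rangle$, restricts both suprema to $\Pc_h^a$; the martingale property removes the gradient term; and the second-order Taylor remainder is split into the same three regions, with $r$, then $\delta$, then $\bar h$ chosen in that order.

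One harmless nitpick, which the paper shares: in region (iii) the bound should read $\frac{2Lr}{c(r)}\,(a-c(0))$ rather than $\frac{2Lr}{c(r)}\,a$, because $c$ may be negative on $[0,r]$. This does not affect choosing $r$ large.
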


\begin{proof}
Let $f \in \Cb^2(\R^d)$. Then, using \eqref{eq.apriori.mart.lipschitz.thm} together with the fact that $ \|f\|_{\Lip,1}\leq \|\nabla^2 f\|_\infty$,
\[
\sup_{h\in (0,h_0)}\frac{\|I_hf-P_hf\|_\infty}{h}\leq c^*\big(\|\nabla^2 f\|_\infty\big),
\]
so that the claim in part b) again follows once we have established part a).\ To that end, let $a\geq 0$ as in Lemma \ref{lem.apriori2} with $L:=\|\nabla^2 f\|_\infty$, so that, for all $h\in (0,h_0)$ and $x\in \R^d$,
\begin{align*}
    \bigg|\frac{(I_hf)(x)-(P_hf)(x)}{h}-\frac{G_h\big(\nabla^2 f(x)\big)}{h}\bigg| &\leq  \sup_{\nu\in \Pc_h^a} \frac1h \bigg| \int_{\R^d}f\big(\psi_{h}(x)+z\big)\, \nu(\d z)-(P_hf)(x)\\
    &\quad - \int_{\R^d\times \R^d}\frac12\langle\nabla^2 f(x)(z-y),z-y\rangle \,\pi^*_{h,\nu}(\d y,\d z)\bigg|.
\end{align*}
Using Taylor's theorem together with the martingale condition \eqref{eq.def.mart.coupling}, for all $\nu\in \Pc_h^a$,
\begin{align*}
    \frac1h \bigg|\int_{\R^d}&f\big(\psi_{h}(x)+z\big)\, \nu(\d z)-(P_hf)(x)
    - \int_{\R^d\times \R^d}\frac12\langle\nabla^2 f(x)(z-y),z-y\rangle \,\pi^*_{h,\nu}(\d y,\d z)\bigg|\\
    &\quad\leq \int_{\R^d\times \R^d}\int_0^1s\Big|\nabla^2 f\big(\psi_h(x)+y+s(z-y)\big)-\nabla^2 f(x)\Big|\frac{|z-y|^2}{h} \,\d s\, \pi_{h,\nu}^*(\d y,\d z).
\end{align*}	
We split the last integral in three parts and estimate each one of them separately.\ Let $R\geq 0$ and $\ep>0$. Then, there exists $r>0$ such that $\frac{6Lar}{c(r)}\leq \ep$.\ Moreover, by \eqref{eq.ass.psi_h}, there exist $\overline h\in (0,h_0]$ and $\de>0$ such that
\begin{equation}\label{eq.unifcont.hessian}
    \sup_{|x|\leq R}\big|\nabla^2 f\big(\psi_h(x)+u\big)-\nabla^2 f(x)\big|\leq \frac{\epsilon}{3r}\quad\text{for all }h\in \big(0,\overline h\big)\text{ and }u\in \R^d\text{ with }|u|\leq 2\delta.
\end{equation} 
Last but not least, since $\mu_h\to \de_0$ as $h\downarrow 0$ by Assumption \eqref{cond.A}, after potentially modifying $\overline h\in (0,h_0]$, we may w.l.o.g.\ assume that $\sqrt{2r\overline h}\leq \de$ and
\begin{equation}\label{eq.cond.M.proof.main2}
6Lr\mu_h\big(\big\{y\in \R^d\,\big|\, |y|>\de\big\}\big)\leq\ep\quad \text{for all }h\in \big(0,\overline h\big).
\end{equation}
Then, for all $h\in \big(0,\overline h\big)$, $\nu\in \Pc_h^a$, and $x\in \R^d$ with $|x|\leq R$, by \eqref{eq.unifcont.hessian},
\begin{align*}
    \int_{\{|y|\leq \delta\}\cap \big\{\frac{|z-y|^2}{2h}\leq r\big\}}\int_0^1 s\Big|\nabla^2 f\big(\psi_h(x)+y+s(z-y)\big)-\nabla^2 f(x)\Big|&\frac{|z-y|^2}{h}\,\d s\,\pi_{h,\nu}^*(\d y,\d z)\leq \frac{\ep}{3},
\end{align*}
by \eqref{eq.cond.M.proof.main2},
\begin{align*}
    \int_{\{|y|> \delta\}\cap \big\{\frac{|z-y|^2}{2h}\leq r\big\}}&\int_0^1 s\Big|\nabla^2 f\big(\psi_h(x)+y+s(z-y)\big)-\nabla^2 f(x)\Big|\frac{|z-y|^2}{h}\,\d s\,\pi_{h,\nu}^*(\d y,\d z)\\
    &\leq 2Lr\mu_h\big(\big\{y\in \R^d\,\big|\, |y|>\de\big\}\big)\leq\frac{\ep}3,
\end{align*}
and
\begin{align*}
    &\int_{\big\{\frac{|z-y|^2}{2h}> r\big\}}\int_0^1 s\Big|\nabla^2 f\big(\psi_h(x)+y+s(z-y)\big)-\nabla^2 f(x)\Big|\frac{|z-y|^2}{h}\,\d s\,\pi_{h,\nu}^*(\d y,\d z)\\
    &\qquad\leq 2L \int_{\big\{\frac{|z-y|^2}{2h}> r\big\}} \frac{|z-y|^2}{2h}\,\pi_{h,\nu}^*(\d y,\d z)\leq \frac{2Lr}{c(r)}\int_{\R^d\times \R^d} c\bigg(\frac{|z-y|^2}{2h}\bigg)\,\pi_{h,\nu}^*(\d y,\d z)\\
    & \qquad\leq \frac{2Lar}{c(r)}\leq\frac{\ep}3.
\end{align*}
The proof is complete.
\end{proof}

Again, we provide some examples for penalizations that satisfy Assumption \eqref{cond.Pprime}.

\begin{example}[Martingale Wasserstein penalization]\label{ex.wassersteinsop}
Let $\phi\colon [0,\infty)\to [0,\infty]$ convex and lower semicontinuous with $\phi(0)=0 $ and $\phi(v)\neq 0$ for some $v\in (0,\infty)$.\ Again, $\phi$ is nondecreasing and continuous on $ \operatorname{dom}(\phi):= \{v \in [0,\infty)\,|\, \phi(v) < \infty\} $. Further, let $ p \in (2,\infty)$, and assume that the map $v \mapsto \phi(v^{2/p})$ is convex. Since $\phi \not \equiv 0$, this implies that $\liminf_{v \to \infty} \frac{\phi(v)}{v^{p/2}}>0$. We consider the penalization
\[  
    \alpha_t(\nu):=t\phi\bigg(\frac{\Wc_p^{\mart}(\mu_t,\nu)}{2t}\bigg) \quad\text{for all }t>0\text{ and }\nu\in \Pc(\R^d),
\]
where $\Wc_p^{\mart}(\mu_t,\nu)$ denotes the squared $p$-martingale Wasserstein distance between $\mu_t$ and $\nu\in \Pc(\R^d)$, i.e.,
\[
    \Wc_p^{\mart}(\mu_t,\nu):=\bigg(\inf_{\pi\in \mart(\mu_t,\nu)}\int_{\R^d\times \R^d} |z-y|^p\,\pi(\d y,\d z)\bigg)^{2/p},
\]
 and we set $\phi(\infty):=\infty$.\ Again, since $ \phi(0)= 0$, it follows that $\alpha_t(\mu_t)=0$ for all $t>0$. Since $\liminf_{v \to \infty} \frac{\phi(v)}{v^{p/2}}>0$ and $\phi$ is nondecreasing, for all $ \epsilon >0 $, there exists some $a \geq 0$ such that
\[
    \phi(v) \geq \epsilon v^{p/2}, \quad \text{for all }v \geq a.
\]
This yields that for all $ v \geq 0 $ we have 
\[
    \phi(v) = \phi(v)- \epsilon v^{p/2} + \epsilon v^{p/2} \geq \left(\min_{u \in [0,a]}\phi(u) - \epsilon u^{p/2}\right) + \epsilon v^{p/2} = \epsilon v^{p/2} - c, 
\]
where $ c := - \min_{u \in [0,a]}(\phi(u) - \epsilon u^{p/2}) \geq 0 $. Let $ t > 0 $ and $ \nu \in \Pc_t$.\ Then, 
\begin{align*}
	\frac{\alpha_t(\nu)}{t} &= \phi\left(\frac{\Wc_p^{\mart}(\mu_t,\nu)}{2t}\right)
	\geq \epsilon \bigg(\frac{\Wc_p^{\mart}(\mu_t,\nu)}{2t}\bigg)^{p/2}-c\\
	&= \inf_{\pi \in \mart(\mu_t,\nu)} \Bigg(\int_{\R^{d} \times \R^{d}} \epsilon\bigg(\frac{|y-z|^2}{2t}\bigg)^{p/2}-c \,\pi(\d y,\d z)\Bigg).
\end{align*}
Defining $ c(v):= \epsilon v^{p/2}-c$ for all $v\geq 0$, shows that Assumption \eqref{cond.Pprime} is satisfied.\ 

For $A\in \sym$, let
\[
\la (A):=\sup_{|x|=1} \langle Ax,x\rangle\in \R. 
\]
Let $t >0$ and $A \in \sym$. Then,
\begin{align*}
	\frac{G_t(A)}{t}&
	= \sup_{\nu \in \Pc_t}\Bigg(\inf_{\pi\in \mart(\mu_t,\nu)} \int_{\R^d\times \R^d}\frac{1}{2t}\big\langle A(z-y),z-y\big\rangle\,\pi(\d y, \d z) - \phi\bigg(\frac{\Wc_p^{\mart}(\mu_t,\nu)}{2t}\bigg)\Bigg)\\
    &\leq  \phi^*\big(\la(A)\big).
\end{align*}
On the other hand, since the map $[0,\infty)\to [0,\infty],\; v \mapsto \phi(v^{2/p})$ is convex, Jensen's inequality implies that
\begin{equation}\label{ineq.wasserstein2} 
	\phi\bigg(\frac{\Wc_p^{\mart}(\mu_t,\nu)}{2t}\bigg)\leq \inf_{\pi \in \mart(\mu_t,\nu)} \int_{\R^{d} \times \R^{d}} \phi\bigg(\frac{|y-z|^2}{2t}\bigg)\, \pi(\d y,\d z).
\end{equation}
This yields that, for all $A\in \sym$,
\begin{align*}
	\frac{G_t(A)}{t}\geq \sup_{\theta\in \R^d}\Bigg(\frac{1}{2t}&\langle A\theta ,\theta \rangle- \phi\bigg(\frac{|\theta|^2}{2t}\bigg)\Bigg)= \phi^*\big(\la(A)\big),
	\end{align*}
 where the first inequality is achieved by considering $\nu=\mu_t\ast {\rm B}_{s}^\theta\in \Pc_t$ with $${\rm B}_{s}^\theta\big(\{\theta\}\big)={\rm B}_{s}^\theta\big(\{-\theta\}\big)=\frac12$$ for all $\theta\in \R^d$. This shows that $\frac{G_t(A)}t=\phi^*\big(\la(A)\big)$ for all $t>0$ and $A\in \sym$.
\end{example}

\begin{example}[Martingale optimal transport penalization]\label{ex.sop}\
Let $\phi\colon \R^d\to [0,\infty)$ with $\phi(0)=0$ and $\lim_{|u|\to \infty}\frac{\phi(u)}{|u|^2}= \infty$.\ For all $t>0$, and $\nu\in \Pc(\R^d)$, let
    \[
        \alpha_t(\nu):=\inf_{\pi\in \mart(\mu_t,\nu)} \int_{\R^d\times \R^d} t\phi\bigg(\frac{z-y}{\sqrt{2t}}\bigg)\, \pi(\d y,\d z)\in [0,\infty].
    \]
    We proceed in a similar way as in Example \ref{ex.fop} to show that Assumption \eqref{cond.Pprime} is satisfied, and define
    \[
        c(v):=\inf_{|u|^2\geq v} \phi(u)\quad\text{for all }v\geq 0.
    \]
    Then, by assumption, $c\colon [0,\infty)\to [0,\infty)$ is nondecreasing with
    \[
        \frac{c(v)}{v}\geq \inf_{|u|^2\geq v}\frac{\phi(u)}{|u|^2}\to \infty\quad \text{as }v\to \infty.
    \]
    Moreover, $\phi(u)\geq c(|u|^2)$ for all $u\in \R^d$, so that
    \begin{align*}
        \alpha_t(\nu)&= \inf_{\pi\in \mart(\mu_t,\nu)} \int_{\R^d\times \R^d} t\phi\bigg(\frac{z-y}{\sqrt{2t}}\bigg)\, \pi(\d y,\d z)\\
        &\geq \inf_{\pi\in \mart(\mu_t,\nu)} \int_{\R^d\times \R^d} tc\bigg(\frac{|z-y|^2}{2t}\bigg)\, \pi(\d y,\d z)
    \end{align*}
    for all $t>0$ and $\nu\in \Pc_t$.\ For $A\in \sym$, we define
    \begin{equation}
        \overline \ph(A):=\sup_{x\in \R^d} \Big(\langle Ax,x\rangle-\ph(x)\Big).
    \end{equation}
    Since $\ph(0)=0$ and $\lim_{|u|\to \infty}\frac{\phi(u)}{|u|^2}= \infty$, it follows that $\overline \ph(A)\in [0,\infty)$ for all $A\in \sym$.
    
    Now, let $A\in \sym$, $t>0$, and $\nu\in \Pc_t$.\ Then, for any $\pi^*\in \mart(\mu_t,\nu)$ with \eqref{eq.almost.mart.opt.coupling}, it follows that
    \begin{align*}
        \int_{\R^d\times \R^d} &\frac{1}{2t}\big\langle A(z-y), z-y \big\rangle\, \pi^*(\d y,\d z)-\frac{\alpha_t(\nu)}{t}\\
        &\qquad\qquad\leq \int_{\R^d\times \R^d} \frac{1}{2t}\big\langle A(z-y), z-y\big\rangle-\phi\bigg(\frac{z-y}{\sqrt{2t}}\bigg)\, \pi^*(\d y,\d z)\leq \overline\ph(A).
    \end{align*}
    Taking the supremum over all $\nu\in \Pc_t$ yields that $\frac{G_t(A)}{t}\leq \overline\phi(A)$.\ On the other hand, proceeding as in Example \ref{ex.wassersteinsop}, for all $t>0$ and $A\in \sym$, it follows that
    \[
        \frac{G_t(A)}{t}\geq \sup_{\theta\in \R^d} \Bigg(\frac1{2t}\langle A\theta,\theta\rangle -\phi\bigg(\frac{\theta}{\sqrt{2t}}\bigg)\Bigg)=\overline\phi(A),
    \]
     which shows that $\frac{G_t(A)}{t}=\overline\phi(A)$.
    \end{example}

    \begin{example}[Volatility control penalization]\label{ex.vola.control}
    Let $\phi\colon [0,\infty)\to [0,\infty]$ be convex and lower semicontinuous with $\phi(0)=0$ and $\phi(v)\neq 0$ for some $v\in (0,\infty)$.\ Moreover, let $ p \in (2,\infty)$, and assume that the map $v \mapsto \phi(v^{2/p})$ is convex.\ Since $\phi \not \equiv 0$, this implies that $\liminf_{v \to \infty} \frac{\phi(v)}{v^{p/2}}>0$.
    Further, let $(\Omega,\mathcal F,\P)$ be a complete probability space and $(Y_t)_{t> 0}$ be a family of random variables on $(\Omega,\mathcal F,\P)$ with $Y_t\sim \mu_t$ for all $t>0$.\ Let $\mathbb F=(\mathcal F_t)_{t\geq 0}$ be a filtration on $(\Omega,\mathcal F)$ satisfying the usual conditions such that $Y_t$ is independent of $\mathcal F_t$ for all $t>0$, and $(W_t)_{t\geq0}$ be an $\mathbb F$-Brownian motion.\ Denote by $\mathcal A$ the set of all $\mathbb F$-progressively measurable $\psd$-valued stochastic processes $\sigma=(\sigma_t)_{t\geq 0}$ with $\E\big(\int_0^t \phi(\tfrac12|\sigma_s|^2)\, \d s\big)<\infty$ for all $t\geq 0$, where $\psd$ denotes the set of all positive semidefinite elements of $\sym$.\ For $t>0$ and $\sigma\in \mathcal A$, let $\mu_t^\sigma\in \Pc(\R^d)$ denote the law of $Y_t+\int_0^t\sigma_s\, \d W_s$ and, for $\si\in \psd$, let
    \[
    |\si|_{{\rm HS}}:=\sqrt{\trace(\si^2)}
    \]
 denote the Hilbert-Schmidt norm of $\si$.\ 
    For $t>0$ and $\nu\in \Pc(\R^d)$, we define
    \[
        \alpha_t(\nu):=\inf \bigg\{ \E\bigg[\int_0^t \phi\bigg(\frac{|\sigma_s|_{{\rm HS}}^2}{2}\bigg)\, \d s\bigg]\,\bigg|\, \sigma\in \mathcal A,\, \mu_t^\sigma=\nu\bigg\},
    \]
    where we use the convention $\inf\emptyset=\infty$.\ Then, using Jensen's inequality and the {Burkholder-Davis-Gundy inequality \cite[Theorem 3.28]{KaratzasShreve1991}},
        \begin{align*}
        \alpha_t(\nu)&=\inf\bigg\{ \E\bigg[ \int_0^t \phi\bigg(\frac{|\sigma_s|_{{\rm HS}}^2}2\bigg)\, \d s\bigg]\,\bigg|\, \sigma\in \mathcal A,\, \mu_t^\sigma=\nu\bigg\}\\
        &\geq \inf\Bigg\{ t\phi\Bigg(\E\bigg[ \bigg(\frac1{2t}\int_0^t |\sigma_s|_{{\rm HS}}^2\, \d s\bigg)^{p/2}\bigg]^{2/p}\Bigg)\,\bigg|\, \sigma\in \mathcal A,\, \mu_t^\sigma=\nu\Bigg\}\\
        &\geq \inf\Bigg\{ t\phi\Bigg(\frac{C_p^{-2/p}}{2t}\E\bigg(\bigg|\int_0^t \sigma_s\, \d W_s\bigg|^p\bigg)^{2/p}\Bigg)\,\bigg|\, \sigma\in \mathcal A,\, \mu_t^\sigma=\nu\Bigg\}\\
        &\geq t\phi\bigg(\frac{C_p^{-2/p}\Wc_p^{\mart}(\mu_t,\nu)^2}{2t}\bigg)\quad\text{for all }t>0\text{ and }\nu\in \Pc_t,
    \end{align*}
    where $C_p>0$ is the constant from the Burkholder-Davis-Gundy inequality.\ Hence, by Example \ref{ex.wassersteinsop} with $\phi\big(C_p^{-2/p} \cdot\big)$, it follows that Assumption \eqref{cond.Pprime} is satisfied.\ Moreover, for all $t>0$ and $A\in \sym$, choosing $\si_s\equiv\si\in \psd$ for all $s\in [0,t]$, it follows that
    \[
        \frac{G_t(A)}{t}\geq \sup_{\si\in \psd} \Big(\trace\big(\si A\si\big) -\phi\big({|\si|_{{\rm HS}}^2}\big)\Big)
    \]
    and, by It\^o's isometry,
    \begin{align*}
        \frac{G_t(A)}{t}\leq  \sup_{\sigma\in \mathcal A} \E\bigg[\frac1{2t}\int_0^t \trace\big(\si_s A\si_s\big) -\phi\bigg(\frac{|\sigma_s|_{{\rm HS}}^2}2\bigg)\d s\bigg]\leq \sup_{\si\in \psd} \Big(\trace\big(\si A\si\big) -\phi\big({|\si|_{{\rm HS}}^2}\big)\Big).
    \end{align*}
    Hence, $$\frac{G_t(A)}{t}=\sup_{\si\in \psd} \Big(\trace\big(\si A\si\big) -\phi\big({|\si|_{{\rm HS}}^2}\big)\Big)=\phi^*\big(\la(A)\big)$$ for all $t>0$ and $A\in \sym$.
\end{example}

\section{Chernoff approximation and control representation}\label{sec.chernoff}

We now pass from the local expansions of the robust one-step operators to dynamically consistent risk measures in continuous time through a suitable iteration.\ More precisely, this section provides Chernoff approximations for the first- and second-order perturbations, and identifies the corresponding limiting semigroups through stochastic control representations.

Throughout this section we set $I_0:=\id_{\Cb(\R^d)}$ and use the notion of convergence in
the mixed topology introduced in Section~\ref{sec.setup}.  If $(h_n)_{n\in\N}\subset(0,\infty)$
is a null sequence, $t\geq0$, and $(k_n)_{n\in\N}\subset\N$ satisfies $k_nh_n\to t$ as $n\to \infty$, we write
\[
        I_{h_n}^{k_n}f:=\underbrace{I_{h_n}\circ\cdots\circ I_{h_n}}_{k_n\text{ times}}f .
\]
A typical choice is $h_n=t/n$ and $k_n=n$, which yields the approximation discussed in the introduction.

We adopt the following standing assumptions on the reference dynamics from \cite{nendel2025chernoff}, which imply condition \eqref{cond.A} from Section \ref{sec.setup}:
\begin{enumerate}
    \item[(M)]\makeatletter\def\@currentlabel{M}\makeatother\label{cond.M} There exists $h_0>0$ such that
	   \begin{equation}
           \notag
	       \sup_{h\in (0,h_0)}\frac1h\Bigg( \int_{\R^d} {1\wedge |y|^2}\,\mu_h(\d y)+\bigg|\int_{\{|y|\leq 1\}} y\,\mu_h(\d y)\bigg|\Bigg)<\infty. 
	   \end{equation}
       \item[(T)]\makeatletter\def\@currentlabel{T}\makeatother\label{cond.T} For all $\ep>0$, there exists $M_\ep>0 $ such that
	   \begin{equation}\label{eq.Ass.M2}
	       \limsup_{h\downarrow0}\frac{\mu_h\big(\big\{y\in \R^d\,\big|\, |y|>M_\ep\big\}\big)}h< \ep.
	   \end{equation}
        \item[(D)]\makeatletter\def\@currentlabel{D}\makeatother\label{cond.D} There exist $\omega\geq 0$, $\delta>0$, and $h_0>0$ such that, for all $u\in \R^d$ with $|u|\leq \delta$,
	\begin{equation}\label{eq.biton}
		\sup_{h\in (0,h_0)}\sup_{x\in \R^d}\frac{|\psi_h(x+u)-\psi_h(x)-u|}{h}\leq \omega |u|.
	\end{equation}   
    Moreover, $\limsup_{h\downarrow0}\frac{|\psi_h(0)|}{h}<\infty$. 
\end{enumerate} 
We point out that there is no relation between the conditions \eqref{cond.M} and \eqref{cond.T} on the one side and condition \eqref{cond.D} on the other side, so that the choice of the family $(\psi_t)_{t>0}$ is still completely independent of the family $(\mu_t)_{t>0}$, see \cite{nendel2025chernoff} for a detailed discussion of these conditions.

In \cite[Theorem 2.1]{nendel2025chernoff}, it is shown that, for every null sequence in $(0,\infty)$, there exist a subsequence $(h_n)_{n\in \N}$, a L\'evy process $Y=(Y_t)_{t\geq0}$ on some probability space $(\Omega,\mathcal F,\P)$, and an $\omega$-Lipschitz function $F\colon \R^d\to \R^d$ such that, for all $t\geq 0$, $f\in \Cb(\R^d)$, $(k_n)_{n\in \N}\subset \N$ with $k_nh_n\to t$ as $n\to \infty$, $(f_n)_{n\in \N}\subset \Cb(\R^d)$ with $f_n\to f$ in the mixed topology as $n\to \infty$, and $r\geq 0$,
	\[
	\sup_{|x|\leq r}\Big|\big(P_{h_n}^{k_n}f_n\big)(x)- \E_\P[f(X_t^x)]\Big|\to 0\quad \text{as }n\to \infty,
	\]
	where, for all $x\in \R^d$, the process $(X_t^x)_{t\geq 0}$ is the unique strong solution to the L\'evy SDE
	\begin{equation}\label{eq.reference.levy.sde}
	\d X_t^x= F(X_t^x)\,\d t+\d Y_t\quad \text{with}\quad X_0^x=x.
	\end{equation}
	Moreover, if $f\in \Ccinf(\R^d)$, then
    \begin{equation}\label{eq.paper.max}
    \lim_{h\downarrow0}\frac{\E_\P[f(X_{h}^\cdot)]-f}{h}=\lim_{n\to \infty}\frac{P_{h_n}f-f}{h_n}=\langle \nabla f,F\rangle+L_{(b,\Sigma,\nu)}f\quad \text{in the mixed topology,}
    \end{equation}
    where $L_{(b,\Sigma,\nu)}$ is the L\'evy generator associated with the L\'evy triplet $(b,\Sigma,\nu)$ of $Y$.\ 

    Throughout the remainder of this section, the sequence $(h_n)_{n\in \N}$, the probability space $(\Om,\mathcal F,\P)$, which we assume to be complete, the L\'evy process $Y$ with L\'evy triplet $(b,\Sigma,\nu)$, and the Lipschitz function $F$ shall be fixed.

All limiting statements below are therefore formulated along this fixed subsequence.  The remaining task is to identify the additional nonlinear contribution generated by the first- and second-order transport costs.

\subsection{First-order optimal control problem}\label{subsec.first.chernoff}

In this section, we consider the first-order transport regime of Section~\ref{sec:first order} assuming Condition~\eqref{cond.P}. By potentially passing to a further subsequence along the fixed null sequence $(h_n)_{n\in \N}$, we may assume that there exists a finite convex function $g\colon\R^d\to\R$ such that
\begin{equation}\label{eq.ass.g.convergence.section5}
        \sup_{|m|\leq L}\left|\frac{g_{h_n}(m)}{h_n}-g(m)\right|\longrightarrow0
        \quad\text{for all }L\geq0.
\end{equation}
    Let $\phi\colon \R^d\to [0,\infty]$  with $\phi(0)=0$, $\lim_{|u|\to \infty}\frac{\phi(u)}{|u|}=\infty$, and $\phi^*=g$.\footnote{An explicit example is given by the Fenchel-Legendre transform $\varphi(x):=\sup_{m\in \R^d}(\langle m,x\rangle-g(m))$ for $x\in \R^d$.}\ Moreover, let $\mathcal A$ denote the set of all progressively measurable controls w.r.t.\ the filtration generated by $Y$ augmented by all $\P$-null sets $\be=(\be_t)_{t\geq 0}\subset \R^d$  with
\[
\E\bigg[\int_0^t|\be_s|\,\d s\bigg]<\infty.
\]
For $\be\in \mathcal A$, we then consider controlled dynamics of the form
\begin{equation}\label{eq.condtolled.levy.sde1}
        \d X_t^{x,\be}=\big(\be_t+F(X_t^{x,\be})\big)\,\d t+\d Y_t,
        \qquad X_0^{x,\be}=x
\end{equation}
and the value function
\[
(V_tf)(x):=\sup_{\be\in \mathcal A} \E\bigg[f\big(X_t^{x,\be}\big)-\int_0^t\ph(\be_s)\, \d s\bigg]
\]
for $t\geq 0$, $f\in \Cb(\R^d)$, and $x\in \R^d$.\ The family $(V_t)_{t\geq 0}$ forms the continuous-time counterpart of the first-order robust one-step operators from Section \ref{sec:first order}.\ The control acts on the drift, and the running cost is chosen so that the Hamiltonian is $g(\nabla f)$.

\begin{remark}\label{rem.first.chernoff.pde}
 The family $(V_t)_{t\geq 0}$ is a strongly continuous convex monotone semigroup on $\Cb(\R^d)$ in the sense of \cite[Definition 3.1]{BlessingDenkKupperNendel2025}.\ One readily verifies that $V_t\colon \Cb(\R^d)\to \Cb(\R^d)$ is convex and monotone for all $t\geq 0$ and $V_0f=f$ for all $f\in \Cb(\R^d)$.\ Moreover, the semigroup property $V_tV_s=V_{t+s}$ corresponds to the dynamic programming principle.\ The strong continuity follows from the a priori estimate
 \begin{align}
\notag \big|V_tf(x)-\E\big[f\big(X_t^{x,0}\big)\big]\big| & \leq \sup_{\be\in \mathcal A} \E\bigg[\int_0^te^{\om t}\|f\|_{\Lip}|\be_s|-\phi(\be_s)\, \d s\bigg]\\
&\leq t\sup_{b\in \R^d} \big(e^{\om t}\|f\|_{\Lip}|b|-\varphi(b)\big) \label{apriori.SG.first} 
\end{align}
for $t\geq0$, $f\in \Lipb(\R^d)$, and $x\in \R^d$ together with \cite[Proposition C.3]{nendel2025chernoff} and the fact that $\Lipb(\R^d)$ is dense in $\Cb(\R^d)$ in the mixed topology.\ Moreover, using \eqref{apriori.SG.first} together with \cite[Equation (C.2)]{nendel2025chernoff}, for all $\ep>0$, there exist $\delta,t_0>0$ such that
\[
V_t \big(f(x+\cdot\,)\big)\leq (V_tf)(x+\cdot\, )+\ep t
\]
for all $t\in [0,t_0]$, $f\in \Lipb(\R^d)$, and $x\in \R^d$ with $|x|\leq \de$.\ Moreover, using the fundamental theorem of calculus, one readily verifies that, for all $K\Subset\R^d$,
\begin{equation}\label{eq.gen.control.first}
\lim_{h\downarrow 0} \sup_{x\in K}\bigg|\frac{V_hf(x)-\E\big[f\big(X_h^{x,0}\big)\big]}{h}-g\big(\nabla f(x)\big)\bigg|=0\quad \text{for all }f\in \Cc^1(\R^d).
\end{equation}
\end{remark}

The first-order correction is therefore separated from the reference L\'evy dynamics at the level of the generator.\ The next subsection gives the analogous representation when the local perturbation is of second order.

\subsection{Second-order optimal control problem}\label{subsec.second.chernoff}

We now turn to the martingale optimal  transport setup from Section \ref{sec:second order}.\ By potentially passing to a further subsequence along the null sequence $(h_n)_{n\in \N}$, we may assume that there exists a finite convex function
$G\colon\sym\to\R$ such that
\begin{equation}\label{eq.ass.G.convergence.section5}
        \sup_{|A|\leq L}\left|\frac{G_{h_n}(A)}{h_n}-G(A)\right|\longrightarrow0
        \quad\text{for all }L\geq0.
\end{equation}
After potentially enriching the probability space $(\Om,\mathcal F,\P)$, we may w.l.o.g.\ assume that there exists  $d$-dimensional Brownian motion $W=(W_t)_{t\geq0}$ that is independent of $Y$.\ Now, let $\phi\colon \sym_+\to [0,\infty]$ with $\phi(0)=0$, $\lim_{|\si|_{\rm HS}\to \infty}\frac{\phi(\si)}{|\si|_{\rm HS}^2}=\infty$, and 
\[
\sup_{\si\in \sym_+} \Big(\trace(\si A\si)-\phi(\si)\Big)=G(A)\quad \text{for all }A\in \sym.\footnote{An explicit example is given by the Fenchel-Legendre transform $\varphi(\si):=\sup_{A\in \sym}(\trace(\si A\si)-G(A))$ for $\si\in \psd$.}
\]
 In this section, we consider the set $\mathcal A$ of all progressively measurable controls w.r.t.\ the filtration generated by $(Y,W)$ augmented by all $\P$-null sets $\si=(\si_t)_{t\geq 0}\subset \sym_+$ with
\[
\E\bigg[\int_0^t |\si_s|_{\rm HS}^2\, \d s\bigg]<\infty.
\]
For $\si\in \mathcal A$, we then consider controlled dynamics of the form
\begin{equation}\label{eq.condtolled.levy.sde2}
        \d X_t^{x,\si}=F(X_t^{x,\si})\,\d t+\d Y_t+ \si_t\, \d W_t,
        \qquad X_0^{x,\si}=x
\end{equation}
and the value function
\[
(V_tf)(x):=\sup_{\si\in \mathcal A} \E\bigg[f\big(X_t^{x,\si}\big)-\int_0^t\phi\bigg(\frac{\si_s}{\sqrt{2}}\bigg)\, \d s\bigg]
\]
for $t\geq 0$, $f\in \Cb(\R^d)$, and $x\in \R^d$.\ Here, the control changes the quadratic variation of the additional Brownian component.\ Accordingly, the local nonlinear term acts on the Hessian and is described by $G(\nabla^2 f)$.

\begin{remark}\label{rem.second.chernoff.pde}
 Again, the family $(V_t)_{t\geq 0}$ is a strongly continuous convex monotone semigroup on $\Cb(\R^d)$ in the sense of \cite[Definition 3.1]{BlessingDenkKupperNendel2025}.\ As in Section \ref{subsec.first.chernoff}, $V_t\colon \Cb(\R^d)\to \Cb(\R^d)$ is convex and monotone for all $t\geq 0$, $V_0f=f$ for all $f\in \Cb(\R^d)$, and the semigroup property $V_tV_s=V_{t+s}$ corresponds to the dynamic programming principle.\ Now, the strong continuity follows from the estimate 
 \begin{align}
\notag \big|V_tf(x)-\E\big[f\big(X_t^{x,0}\big)\big]\big|  &\leq \sup_{\si\in \mathcal A} \E\bigg[\int_0^te^{2\om^2 t^2}\|\nabla^2f\|_{\infty}|\si_s|_{\rm HS}^2-\phi\bigg(\frac{\si_s}{\sqrt{2}}\bigg)\, \d s\bigg]\\
&\leq t\sup_{\si\in \sym_+} \big(2e^{2\om^2 t^2}\|\nabla^2 f\|_{\infty}|\si|_{\rm HS}^2-\varphi(\si)\big) \label{apriori.SG.second}
\end{align}
for $t\geq0$, $f\in \Cb^2(\R^d)$, and $x\in \R^d$ together with \cite[Proposition C.3]{nendel2025chernoff} and the fact that $\Cb^2(\R^d)$ is dense in $\Cb(\R^d)$ in the mixed topology.\ Moreover, using Gronwall's lemma,
\[
|x+X_t^{y,\si}-X_t^{x+y,\si}|\leq \om te^{\om t}|x|\quad \P\text{-a.s.\ for all }x,y\in \R^d \text{ and }t\geq 0.
\]
Using this estimate, for all $\ep>0$, there exist $\delta,t_0>0$ such that
\[
V_t \big(f(x+\cdot\,)\big)\leq (V_tf)(x+\cdot\, )+\ep t
\]
for all $t\in [0,t_0]$, $f\in \Lipb(\R^d)$, and $x\in \R^d$ with $|x|\leq \de$.\ Using It\^o's formula, one readily verifies that, for all $K\Subset \R^d$,
\begin{equation}\label{eq.gen.control.second}
\lim_{h\downarrow 0} \sup_{x\in K}\bigg|\frac{V_hf(x)-\E\big[f\big(X_h^{x,0}\big)\big]}{h}-G\big(\nabla^2 f(x)\big)\bigg|=0\quad \text{for all }f\in \Cc^2(\R^d).
\end{equation}
\end{remark}

\subsection{Identification as stochastic control problems}\label{subsec.control.identification}

We now identify the control semigroups constructed in the previous subsections as the limits of the iterated robust one-step schemes.\ The reference dynamics has already been fixed through \eqref{eq.paper.max}, while the first- and second-order perturbations are identified by \eqref{eq.gen.control.first} and \eqref{eq.gen.control.second}.\ The comparison principle for convex monotone semigroups from \cite{BlessingDenkKupperNendel2025} then turns the agreement of generators on smooth test functions into convergence of the Chernoff products.

\begin{theorem}[Chernoff approximation]\label{thm.chernoff.section5}
Consider the setup described in Subsection \ref{subsec.first.chernoff} or Subsection \ref{subsec.second.chernoff}.\ Then, for all $t\geq 0$, $f\in \Cb(\R^d)$, $(k_n)_{n\in \N}\subset \N$ with $k_nh_n\to t$ as $n\to \infty$, $(f_n)_{n\in \N}\subset \Cb(\R^d)$ with $f_n\to f$ in the mixed topology as $n\to \infty$, and $r\geq 0$,
	\[
	\sup_{|x|\leq r}\Big|\big(I_{h_n}^{k_n}f_n\big)(x)- (V_tf)(x)\Big|\to 0\quad \text{as }n\to \infty.
	\]
	Moreover, if $f\in \Cb(\R^d)$ such that $\big(\frac{I_{h_n}f-f}{h_n}\big)_{n\in \N}$ converges in the mixed topology, then
    \[
\lim_{h\downarrow0}\frac{V_hf-f}{h}=\lim_{n\to \infty}\frac{I_{h_n}f-f}{h_n}\quad \text{in the mixed topology.}
    \]
\end{theorem}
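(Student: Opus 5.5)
The plan is to apply the abstract Chernoff approximation theorem for convex monotone semigroups from \cite{BlessingDenkKupperNendel2025}. This reduces everything to two checks: certain structural properties of the one-step operators $(I_h)_{h\in(0,h_0)}$, and the agreement of their infinitesimal behaviour with the generator of $(V_t)_{t\ge0}$ on a core of smooth functions. The cited result states the following. Suppose the family $(I_h)$ consists of convex monotone operators with $I_h0=0$ that are uniformly stable in the mixed topology (uniform Lipschitz bounds, tightness/localization estimates, and a translation-type estimate $I_h(f(x+\cdot))\le (I_hf)(x+\cdot)+\ep h$ for small $|x|$ and $h$). Suppose further that $\frac{I_{h_n}f-f}{h_n}$ converges in the mixed topology to $Af$ for $f$ in a sufficiently rich class $D$ (here $\Cc^\infty(\R^d)$, or $\Cb^1$ respectively $\Cb^2$). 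Then, along the subsequence, $I_{h_n}^{k_n}f_n$ converges to a strongly continuous convex monotone semigroup $S$ whose $\Gamma$-generator extends $A|_D$. By the uniqueness/comparison principle in \cite{BlessingDenkKupperNendel2025}, $S$ is then identified with any other such semigroup whose generator agrees with $A$ on $D$ and which enjoys the same localization properties, namely $V$ by Remarks~\ref{rem.first.chernoff.pde} and \ref{rem.second.chernoff.pde}. The second assertion (generator identity) is then part of that same abstract theorem.

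The concrete steps are as follows. First, the generator. For $f\in\Cc^\infty(\R^d)$ I write
\[
\frac{I_{h_n}f-f}{h_n}=\frac{P_{h_n}f-f}{h_n}+\frac{I_{h_n}f-P_{h_n}f}{h_n}.
\]
The first term converges to $\langle\nabla f,F\rangle+L_{(b,\Sigma,\nu)}f$ by \eqref{eq.paper.max}. The second term converges to $g(\nabla f)$ by Theorem~\ref{thm.generator} b) with \eqref{eq.ass.g.convergence.section5}, respectively to $G(\nabla^2 f)$ by Theorem~\ref{thm.generator2} b) with \eqref{eq.ass.G.convergence.section5}. Boundedness in sup-norm follows from the a priori bounds $hc^*(\|\nabla f\|_\infty)$ and $hc^*(\|\nabla^2f\|_\infty)$. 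On the other side, \eqref{eq.gen.control.first} and \eqref{eq.gen.control.second}, combined with \eqref{eq.paper.max} for $\E[f(X_h^{x,0})]$, show that $\frac{V_hf-f}{h}$ converges to the same limit. Hence both generators coincide on $\Cc^\infty(\R^d)$.

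Second, the stability estimates for $I_h$. Monotonicity, convexity and $I_h0=0$ (the latter via $\alpha_h(\mu_h)=0$ and $\alpha_h\ge0$) are immediate. For $f\in\Lipb$ I compare $I_h$ with $P_h$ using \eqref{eq.apriori.lipschitz.thm}, namely $0\le I_hf-P_hf\le hc^*(\|f\|_{\Lip})$. This lets me transfer the iterated estimates of \cite{nendel2025chernoff} for $P_h^k$ (Lipschitz growth $e^{\omega kh}$ from condition (D), tightness from (M), (T)) to $I_h^k$ up to an additive error $kh\,c^*(e^{\omega kh}L)$, which is bounded on compact time intervals. Next I need the translation estimate. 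Since $I_h(f(x+\cdot))(y)$ involves $\psi_h(y)+x$ rather than $\psi_h(y+x)$, condition (D) gives $|\psi_h(y+x)-\psi_h(y)-x|\le\omega h|x|$, and hence
\[
I_h(f(x+\cdot))\le (I_hf)(x+\cdot)+\omega h|x|\,\|f\|_{\Lip}.
\]
In the second-order case I run the same argument with $\Cb^2$ test functions and \eqref{eq.apriori.mart.lipschitz.thm}. The appendix localization results are then used to pass from Lipschitz bounds to the mixed-topology compactness needed.

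The main obstacle I expect is verifying the precise hypotheses of the abstract Chernoff/comparison theorem in the mixed topology, uniformly in the iteration count $k_n$. The one-step bounds above are only for Lipschitz or $C^2$ data, and the nonlinear error terms must be controlled along iterates, whose regularity deteriorates. My first attempt will be to propagate Lipschitz constants through $I_h$. A sup over translated measures preserves the Lipschitz bound $e^{\omega h}$-wise by (D), since for each $\nu$ the map $x\mapsto\int f(\psi_h(x)+z)\nu(\d z)$ has Lipschitz constant at most $(1+\omega h)\|f\|_{\Lip}$. This gives $\|I_h^kf\|_{\Lip}\le e^{\omega kh}\|f\|_{\Lip}$, and the rest of the argument then follows the linear case in \cite{nendel2025chernoff}.
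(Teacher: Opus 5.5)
Your overall architecture is the paper's. You use an abstract Chernoff theorem for $(I_{h_n})$ (the paper takes it from \cite[Theorem 2.11]{BlessingKupperNendel2025}). You identify the limit on $\Ccinf(\R^d)$ via $\frac{I_hf-f}{h}=\frac{P_hf-f}{h}+\frac{I_hf-P_hf}{h}$, using Theorem~\ref{thm.generator}~b)/Theorem~\ref{thm.generator2}~b), \eqref{eq.paper.max}, \eqref{eq.gen.control.first} and \eqref{eq.gen.control.second}. You then invoke comparison via \cite[Theorem 4.9]{BlessingDenkKupperNendel2025}. Your bound $\|I_hf\|_{\Lip}\le e^{\omega h}\|f\|_{\Lip}$ is exactly the paper's property (iii). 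Two steps, however, do not go through as written.

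\emph{First, stability of the iterates in the second-order case.} You transfer estimates from $P_h^k$ to $I_h^k$ by applying the one-step bound to each iterate $I_h^jf$, and you propose ``the same argument'' with $\Cb^2$ data and \eqref{eq.apriori.mart.lipschitz.thm}. That bound needs $\|\cdot\|_{\Lip,1}$-control of its argument, and this control is not propagated. Under (D), $\psi_h$ is only Lipschitz, so already $P_hf$ need not be differentiable; moreover, a supremum destroys upper second-order bounds. Falling back on Lipschitz propagation does not help, because for merely Lipschitz data the martingale correction is not $O(h)$. Take $d=1$, $\mu_h=\delta_0$, $\psi_h=\mathrm{id}$, and $\alpha_h$ as in Example~\ref{ex.wassersteinsop} with $p=4$ and $\phi(v)=v^2$. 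With $u(x)=|x|\wedge1$ and $\nu=\frac12(\delta_\theta+\delta_{-\theta})$, $\theta=h^{1/3}$, one gets $(I_hu)(0)-(P_hu)(0)\ge\frac34h^{1/3}$. Hence your additive error bound over $k\approx t/h$ steps diverges.

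The paper never controls errors along iterates for general data. Besides convexity, monotonicity and (iii), it uses cash-additivity $I_h(f+c)=I_hf+c$, which gives sup-norm nonexpansiveness and which you do not state. It also uses only a one-step estimate of the form \eqref{eq.estimate.comapctness} for $f\in\Ccinf(\R^d)$. Lemma~\ref{lem.cont.above}, via \cite[Lemma 2.16]{BlessingKupperNendel2025}, then passes this to iterates using only monotonicity and cash-additivity applied to a fixed smooth cutoff. This yields \cite[Assumption 2.10]{BlessingKupperNendel2025}.

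\emph{Second, the identification $S=V$.} The paper does not invoke \cite[Theorem 4.9]{BlessingDenkKupperNendel2025} merely from agreement of generators on $\Ccinf(\R^d)$ plus translation or localization estimates. Because the generator domain is not invariant, it first verifies an extra hypothesis in Lemma~\ref{lem.uniqueness}. One needs cutoffs $\chi_n\in\Ccinf(\R^d)$ with $\chi_n=1$ on $\{|x|\le n\}$ such that $B(f\chi_n)$ is uniformly bounded from above for every $0\le f\in\Cbinf(\R^d)$ with $\limsup_{h\downarrow0}\sup_x\frac{(T_hf)(x)-f(x)}{h}<\infty$. This exploits the structure $Bf=\langle\nabla f,F\rangle+B_0f$. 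Since $F$ grows linearly, the cutoffs are chosen with $(1+|x|)|\nabla\chi_n|\le1$. The upper bound on $\frac{T_hf-f}{h}$, combined with \cite[Lemma 4.8]{BlessingDenkKupperNendel2025}, is then used to show that $\langle\nabla f,F\rangle$ is bounded above on all of $\R^d$. Your phrase ``the same localization properties'' does not supply this step, and without it (or an equivalent argument) the comparison principle cannot be applied.
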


\begin{proof} 
One readily verifies that the family of operators $(I_t)_{t>0}$ from Section \ref{sec:first order} and Section \ref{sec:second order} satisfies the following properties:
\begin{enumerate}[label=\textnormal{(\roman*)}]
\item $I_t$ is convex and monotone for all $t\geq 0$
\item $I_t(f+c)=I_t f+c$ for all $f\in \Cb(\R^d)$ and $c\in \R$.
\item $\|I_hf\|_{\Lip}\leq e^{\om h}\|f\|_{\Lip}$ for all $h\in (0,h_0)$ and $f\in \Lipb(\R^d)$.
\end{enumerate}
The properties (i) and (ii) imply that
\[
        \|I_t f-I_t g\|_\infty\leq \|f-g\|_\infty\quad \text{for all }f,g\in \Cb(\R^d).
\]
Therefore, by Lemma \ref{lem.cont.above} together with \eqref{eq.apriori.lipschitz}, \eqref{eq.apriori.lipschitz.mart}, and \cite[Proposition 4.4 a) and Proposition 5.6 c)]{nendel2025chernoff}, the family $(I_{h_n})_{n\in \N}$ satisfies \cite[Assumption 2.10]{BlessingKupperNendel2025}.\ Using Theorem \ref{thm.generator}, Theorem \ref{thm.generator2}, Equation \eqref{eq.paper.max}, Equation \eqref{eq.gen.control.first}, and Equation \eqref{eq.gen.control.second},
\[
\lim_{n\to \infty} \frac{I_{h_n}f-f}{h_n}=\lim_{h\downarrow 0}\frac{V_{h}f-f}{h}\quad \text{in the mixed topology for all }f\in \Ccinf(\R^d).
\]
Using the fact that, by Remark \ref{rem.first.chernoff.pde} and Remark \ref{rem.second.chernoff.pde}, $(V_t)_{t\geq0}$ is a convex monotone semigroup together with Lemma \ref{lem.uniqueness}, \cite[Theorem 2.11]{BlessingKupperNendel2025}, and \cite[Theorem 4.9]{BlessingDenkKupperNendel2025}, the claim follows.
\end{proof}

\appendix

\section{Auxiliary results}\label{app.C}

The following lemma discusses some elementary properties of the Fenchel-Legendre transform.

\begin{lemma}\label{lem.conjugate}
 Let $c\colon [0,\infty)\to \R\cup\{\infty\}$ be nondecreasing with $c(0)\leq 0$ and $\frac{c(v)}{v}\to \infty$ as $v\to \infty$.
 \begin{enumerate}
    \item[a)] The map $c^*\colon \R\to [0,\infty)$, given by
    \begin{equation}\label{eq.def.conjugate}
     c^*(u):= \sup_{v\geq 0} \big(uv-c(v)\big)\quad\text{for all }u\in\R,
    \end{equation}
    is well-defined, convex, and nondecreasing with $c^*(u)=-c(0)$ for all $u\leq 0$. 
    \item[b)] Let $c(v)<\infty$ for all $v\geq0$.\ Then, there exists a convex nondecreasing function $c^{**}\colon [0,\infty)\to \R$ with $\frac{c^{**}(v)}{v}\to \infty$ as $v\to \infty$ and
    $$c(0)\leq c^{**}(v)\leq c(v) \quad\text{for all }v\geq 0.$$
 \end{enumerate}
\end{lemma}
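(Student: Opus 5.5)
For part a), I would handle the well-definedness and the other properties directly from the definition. Since $c(v)/v\to\infty$, for each fixed $u\in\R$ there is $v_u\ge 0$ with $c(v)\ge (|u|+1)v$ for $v\ge v_u$. Hence $uv-c(v)\le -v\le 0$ for $v\ge v_u$. On $[0,v_u]$, monotonicity gives $c(v)\ge c(0)>-\infty$, so $uv-c(v)\le |u|v_u-c(0)$. The supremum is therefore finite. Taking $v=0$ gives $c^*(u)\ge -c(0)\ge 0$, so $c^*$ maps into $[0,\infty)$; points where $c(v)=\infty$ contribute $-\infty$ and are harmless. Convexity and monotonicity follow because $c^*$ is a supremum of affine maps $u\mapsto uv-c(v)$ with nonnegative slopes $v\ge 0$. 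For $u\le 0$, both $uv\le 0$ and $c(v)\ge c(0)$ hold, so $uv-c(v)\le -c(0)$, with equality at $v=0$. This gives $c^*(u)=-c(0)$.

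For part b), I would define $c^{**}(v):=\sup_{u\in\R}(uv-c^*(u))$ for $v\ge 0$, i.e.\ the biconjugate of $c$ extended by $+\infty$ on $(-\infty,0)$. The properties then follow in order.

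First, $c^{**}\le c$. By definition $c^*(u)\ge uv-c(v)$ for all $u$, so $uv-c^*(u)\le c(v)$.

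Second, $c^{**}(v)\ge c(0)$. Take $u=0$ to get $c^{**}(v)\ge -c^*(0)=c(0)$, using part a).

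Third, $c^{**}$ is convex, and it is finite because it is bounded above by the finite $c$.

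Fourth, $c^{**}$ is nondecreasing. Since $c^*(u)=-c(0)$ for $u\le0$, those $u$ contribute $uv+c(0)\le c(0)$. So for $v\ge0$ we may restrict to $u\ge0$, obtaining a supremum of nondecreasing functions of $v$. This maximum with the constant $c(0)$ is nondecreasing.

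The main point is superlinearity, which is the only place where the growth of $c$ must transfer to $c^{**}$. For any $M>0$, take $u=M$ to get $c^{**}(v)\ge Mv-c^*(M)$. Since $c^*(M)<\infty$ by part a), this yields $\liminf_{v\to\infty} c^{**}(v)/v\ge M$. Letting $M\to\infty$ gives $c^{**}(v)/v\to\infty$. No Fenchel–Moreau theorem is needed, since only the inequalities $c(0)\le c^{**}\le c$ are required, not $c^{**}=c^{\mathrm{conv}}$.
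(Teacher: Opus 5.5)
Your proof is correct and follows the paper's approach: part a) is argued as in the paper but with more detail, and in part b) you use the same biconjugate construction, the same comparison $c^{**}\le c$, and the same superlinearity bound $c^{**}(v)\ge Mv-c^*(M)$. The differences are cosmetic. You take the supremum over all $u\in\R$ and then reduce to $u\ge 0$, where the paper uses $u\ge0$ from the start. You get $c^{**}\ge c(0)$ directly from $u=0$, where the paper computes $c^{**}(0)=c(0)$ and uses monotonicity. And you use finiteness of $c^*(M)$ from a) instead of an affine minorant $c(v)\ge Mv+C$.
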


\begin{proof}\
\begin{enumerate}
    \item[a)] Since $c$ is nondecreasing, $c(0)\leq 0$ and $\frac{c(v)}{v}\to \infty$ as $v\to \infty$, it follows that $0\leq c^*(u)<\infty$ for all $u> 0$.\ Moreover, for $u\leq 0$,
    \[
    -c(0)\leq c^*(u)\leq -\inf_{v\geq 0} c(v)=-c(0).
    \]
    By definition, $c^*$ is convex and nondecreasing. 
    \item[b)] Let $c(v)<\infty$ for all $v\geq0$ and $c^{**}\colon [0,\infty)\to \R$ be given by
    \[
    c^{**}(v):=\sup_{u\geq 0} \big(uv-c^*(u)\big)\quad\text{for all }v\geq 0.
    \]
    Then, $c^{**}$ is convex and nondecreasing. Moreover, $$c^{**}(0)=-\inf_{u\geq 0}c^*(u)=-c^*(0)=c(0)$$ and
    \[
    c^{**}(v)=\sup_{u\geq 0}\big(uv-c^*(u)\big)\leq \sup_{u\geq 0}\big(uv-(uv-c(v))\big)=c(v)\quad\text{for all }v\geq 0.
    \]
    Now, let $M\geq 0$. Since $c$ is nondecreasing with $\frac{c(v)}{v}\to \infty$ as $v\to \infty$, there exists a constant $C\in \R$ such that
    \[
    c(v)\geq Mv+C\quad \text{for all }v\geq 0.
    \]
    Hence, $c^*(M)\leq -C$, and it follows that
    \[
    c^{**}(v)\geq Mv-c^*(M)\geq Mv+C\quad\text{for all }v\geq 0.
    \]
    We have therefore shown that $\frac{c^{**}(v)}{v}\to \infty$ as $v\to \infty$.\qedhere
\end{enumerate}
\end{proof}

The next lemma plays a fundamental role in order to invoke the comparison result \cite[Theorem 4.9]{BlessingDenkKupperNendel2025}, which is used in Section \ref{sec.chernoff}.

\begin{lemma}\label{lem.uniqueness}
Let $(T_t)_{t\geq0}$ be a convex monotone semigroup in the sense of \cite[Definition 3.1]{BlessingDenkKupperNendel2025} with generator $B\colon D(B)\to \Cb(\R^d)$ such that $\Ccinf(\R^d)\subset D(B)$ and
\[
 Bf=\langle \nabla f,F\rangle +B_0f\quad \text{for all }f\in \Ccinf(\R^d),
\]
where $F\colon \R^d\to \R^d$ is Lipschitz and $B_0\colon \Ccinf(\R^d)\to \Cb(\R^d)$ satisfies
\[
\| B_0f\|_\infty\leq C\big(\|f\|_\infty+\|\nabla f\|_\infty+\|\nabla^2 f\|_\infty\big)\quad \text{for all }f\in \Ccinf(\R^d)
\]
with a constant $C\geq 0$ independent of $f\in \Ccinf(\R^d)$.\
Then, there exists a sequence $(\chi_n)_{n\in \N}\subset \Ccinf(\R^d)$ with $0\leq \chi_n\leq 1$ and $\chi_n(x)=1$ for all $n\in \N$ and $x\in \R^d$ with $|x|\leq n$ such that $\big(B(f \chi_n)\big)_{n\in \N}$ is uniformly bounded from above for all $f\in \Cbinf(\R^d)$ with $f\geq0$ and
\[
 \limsup_{h\downarrow 0}\sup_{x\in \R^d}\frac{(T_h f)(x)-f(x)}{h}<\infty.
\]
\end{lemma}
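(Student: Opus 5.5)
The plan is to construct the cutoffs explicitly and reduce the claim to a pointwise estimate on $B(f\chi_n)$. Fix $\chi\in\Ccinf(\R^d)$ with $0\le\chi\le1$, $\chi=1$ on $\{|x|\le1\}$ and $\chi=0$ outside $\{|x|\le2\}$. Set $\chi_n(x):=\chi(x/n)$. Then $\chi_n=1$ on $\{|x|\le n\}$, and
\[
\|\nabla\chi_n\|_\infty\le \frac{\|\nabla\chi\|_\infty}{n},\qquad \|\nabla^2\chi_n\|_\infty\le\frac{\|\nabla^2\chi\|_\infty}{n^2}.
\]
The key point for the drift term is that $\nabla\chi_n$ is supported in the annulus $n\le|x|\le2n$. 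There $|F(x)|\le|F(0)|+\om|x|\le|F(0)|+2\om n$, so $\sup_n\|\langle\nabla\chi_n,F\rangle\|_\infty<\infty$. This is where Lipschitz continuity of $F$ is used. It compensates for the linear growth of $F$, which would otherwise spoil a naive bound.

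Next fix $f\in\Cbinf(\R^d)$ with $f\ge0$ and with the stated $\limsup$ bound; in fact the estimate below only needs $f\in\Cbinf(\R^d)$. Since $f\chi_n\in\Ccinf(\R^d)\subset D(B)$, the product rule gives
\[
B(f\chi_n)=\chi_n\langle\nabla f,F\rangle+f\langle\nabla\chi_n,F\rangle+B_0(f\chi_n).
\]
By the growth bound on $B_0$ together with the product rule,
\[
\|B_0(f\chi_n)\|_\infty\le C\big(\|f\|_\infty+\|\nabla f\|_\infty+\|f\|_\infty\|\nabla\chi_n\|_\infty+\|\nabla^2f\|_\infty+2\|\nabla f\|_\infty\|\nabla\chi_n\|_\infty+\|f\|_\infty\|\nabla^2\chi_n\|_\infty\big),
\]
and this is uniformly bounded in $n$. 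The middle term $f\langle\nabla\chi_n,F\rangle$ is bounded by the annulus argument above.

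The hardest term is $\chi_n\langle\nabla f,F\rangle$, because $\langle\nabla f,F\rangle$ may be unbounded: $F$ can grow linearly while $\nabla f$ stays merely bounded. This is exactly where the semigroup hypothesis on $f$ must enter, since we only need an upper bound. My first attempt is to use it as follows. The assumption gives a constant $K$ with $T_hf\le f+Kh$ for small $h$. I would then localize, using the hypotheses of \cite[Definition 3.1]{BlessingDenkKupperNendel2025}, to compare $T_h(f\chi_n)$ with $\chi_n T_h f$ plus an error of order $h$ uniformly in $n$, and conclude $B(f\chi_n)\le K'$ by letting $h\downarrow0$.

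That localization step is uncertain in general. A cleaner route, which I would try first, is to read off from the convergence of $(T_hf-f)/h$ on compacts that $\langle\nabla f,F\rangle+B_0f\le K$ pointwise; this can be justified by approximating $f$ locally by $f\chi_m$ and using locality of the generator. Since $\chi_n\in[0,1]$, we have $\chi_n\langle\nabla f,F\rangle\le\max\{0,\langle\nabla f,F\rangle\}\chi_n$. Combining this with $\langle\nabla f,F\rangle\le K+\|B_0f\|$-type bounds, where $B_0f$ is bounded via the growth estimate extended to $\Cbinf$ by the cutoff, yields a uniform upper bound. Identifying $\langle\nabla f,F\rangle\le K+C'$ rigorously from the semigroup hypothesis is the main obstacle; everything else is the routine computation above.
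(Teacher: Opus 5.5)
Your cutoffs $\chi_n=\chi(\cdot/n)$, the product-rule identity for $B(f\chi_n)$, and the uniform bounds on $f\langle\nabla\chi_n,F\rangle$ and $B_0(f\chi_n)$ are all fine; they do the job of \cite[Lemma A.1]{nendel2025chernoff} in the paper. You also correctly see that everything hinges on an upper bound for $\langle\nabla f,F\rangle$. But you leave that step open, and neither of your routes closes it.

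Route (b) has three problems. It assumes that $(T_hf-f)/h$ converges on compacts, whereas only a $\limsup$ bound is given and $f$ need not lie in $D(B)$. It uses $B_0f$, which is not defined for $f\notin\Ccinf(\R^d)$. And the ``locality of the generator'' it appeals to is not a hypothesis; in the form you need, it is exactly the one-sided comparison below. Route (a), comparing $T_h(f\chi_n)$ with $\chi_nT_hf$, would require localization estimates for general convex monotone semigroups that are not assumed, and it is not needed. Tellingly, your argument never uses $f\ge0$, which is precisely what makes the lemma work.

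The missing idea is a one-sided comparison by monotonicity:
\begin{itemize}
\item Since $f\ge0$ and $0\le\chi_n\le1$, you have $f\chi_n\le f$, with equality on $\{|x|\le n\}$.
\item Monotonicity gives $T_h(f\chi_n)\le T_hf\le f+Kh$ for small $h$. Hence $\big(T_h(f\chi_n)-f\chi_n\big)(x)\le Kh$ for $|x|\le n$.
\item Since $f\chi_n\in D(B)$, letting $h\downarrow0$ gives $B(f\chi_n)(x)\le K$ for $|x|\le n$. This is where the paper invokes \cite[Lemma 4.8]{BlessingDenkKupperNendel2025}.
\item On this ball $\chi_n=1$, so your product-rule identity gives $\langle\nabla f(x),F(x)\rangle\le K+c$, where $c:=\sup_n\|f\langle\nabla\chi_n,F\rangle+B_0(f\chi_n)\|_\infty<\infty$ by your own estimates.
\item Since $n$ is arbitrary, this bound holds on all of $\R^d$. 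Then $B(f\chi_n)\le\max\{0,K+c\}+c$ everywhere, uniformly in $n$.
\end{itemize}
This needs no convergence of $(T_hf-f)/h$, no separate locality principle, and no extension of $B_0$ beyond $\Ccinf(\R^d)$.
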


\begin{proof}
By \cite[Lemma A.1]{nendel2025chernoff} with $\kappa(s)=1+s$ for all $s\geq 0$, there exists a sequence $(\chi_n)_{n\in \N}\subset \Ccinf(\R^d)$ with $0\leq \chi_n\leq 1$, $\chi_n(x)=1$ for all $x\in \R^d$ with $|x|\leq n$, and 
 \begin{equation}\label{eq.bound.chin}
 \sup_{x\in \R^d}\big((1+|x|)\big|\nabla \chi_n(x)\big|+\big|\nabla^2 \chi_n(x)\big|\big)\leq 1
 \end{equation}
for all $n\in \N$. Now, let $f\in \Cbinf(\R^d)$ with $f\geq 0$ and 
 \[
  c_f:=\limsup_{h\downarrow 0}\sup_{x\in \R^d}\frac{(T_hf)(x)-f(x)}h<\infty.
 \]
 Then, $f_n:=f\chi_n\in \Ccinf(\R^d)$ for all $n\in \N$ and, by \cite[Lemma 4.8]{BlessingDenkKupperNendel2025}, 
 \[
 \sup_{|x|\leq n}(Bf_n)(x)\leq c_f\quad \text{for all }n\in \N.
\]
Moreover, \eqref{eq.bound.chin} together with the product rule implies that
\[
c:=\sup_{n\in \N}\big\|f\langle \nabla \chi_n,F\rangle+ B_0f_n\big\|_\infty<\infty,
\]
so that, again by the product rule,
\begin{align*}
 (Bf_n)(x)&=\big\langle \nabla f(x),F(x)\big\rangle+f(x)\big\langle \nabla \chi_n(x),F(x)\big\rangle+ (B_0f_n)(x)\\
 &\geq \langle \nabla f(x),F(x)\rangle-c
\end{align*}
for all $n\in \N$ and $x\in \R^d$ with $|x|\leq n$.\
Hence,
\[
\sup_{x\in \R^d}\big\langle \nabla f(x),F(x)\big\rangle\leq c+c_f,
\]
which implies that
\[
(Bf_n)(x)=\chi_n(x)\big\langle \nabla f(x),F(x)\big\rangle+f(x)\big\langle \nabla \chi_n(x),F(x)\big\rangle+ (B_0f_n)\leq 2c+c_f
\]
for all $x\in \R^d$ and $n\in \N$.
\end{proof}

We conclude with an auxiliary result that allows to verify a uniform equicontinuity condition  w.r.t.\ the mixed topology, which is needed in the context of the Chernoff approximation in Section \ref{sec.chernoff} and the definition of a convex monotone semigroup in \cite{BlessingDenkKupperNendel2025}.

\begin{lemma}\label{lem.cont.above}
 Let $h_0>0$ and $(I_h)_{h\in (0,h_0)}$ be a family of convex monotone operators such that, for all $\ep>0$, there exists a constant $C_\ep\geq 0$ with
\begin{equation}\label{eq.estimate.comapctness}
 \sup_{h\in (0,h_0)}\bigg\|\frac{I_hf-f}{h}\bigg\|_\infty\leq  \ep\|f\|_\infty+C_\ep\sup_{x\in \R^d}\big((1+|x|)\big|\nabla f(x)\big|+\big|\nabla^2f(x)\big|\big)
\end{equation}
for all $f\in \Ccinf(\R^d)$.\ 
Then, for all $\ep>0$, $r,T \geq0$ and $K\Subset \R^d$, there exist $c\geq 0$ and $K'\Subset\R^d$ with
$$\|I_{h}^kf-I_{h}^kg\|_{\infty,K} \leq c\|f-g\|_{\infty,K'}+\ep
$$
for all $h\in (0,h_0)$ and $k\in \N$ with $kh \leq T$ and all $f,g\in \Cb(\R^d)$ with $\|f\|_\infty\leq r$ and $\|g\|_\infty\leq r$.
\end{lemma}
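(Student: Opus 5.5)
The plan is to combine a cut-off approximation with the uniform Lipschitz-type control already present in the sup-norm contractivity. First, I use convexity and monotonicity as a substitute for linearity. For $f,g$ with $\|f\|_\infty,\|g\|_\infty\le r$ and $\lambda\in(0,1)$, convexity gives
\[
I_h^k f\le \lambda I_h^k\Big(\tfrac{f-(1-\lambda)g}{\lambda}\Big)+(1-\lambda)I_h^kg .
\]
Hence
\[
I_h^kf-I_h^kg\le \lambda\Big(I_h^k\Big(\tfrac{f-(1-\lambda)g}{\lambda}\Big)-I_h^kg\Big),
\]
and the same holds with $f,g$ interchanged. The constant $c$ will come from choosing $\lambda$ small but fixed, so that the error term from the tails is at most $\ep$.

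Second, the crucial step is a localization of $I_h^k$, which I obtain by testing with a cut-off from \eqref{eq.estimate.comapctness}. Take the functions $\chi_n$ from \cite[Lemma A.1]{nendel2025chernoff}: $0\le\chi_n\le1$, $\chi_n=1$ on $\{|x|\le n\}$, with $(1+|x|)|\nabla\chi_n|+|\nabla^2\chi_n|\le \eta_n$, where I arrange $\eta_n\to0$ by stretching (for instance $\kappa(s)=1+s$ rescaled). Set $\zeta:=M(1-\chi_n)$, a nonnegative $C^\infty$ function, constant $M$ off a compact set. By the hypothesis, applied to $\zeta$ (after adding constants, using (ii)-type translation invariance or a direct argument on $M\chi_n$), I get
\[
I_h\zeta\le \zeta+h(\ep' M+C_{\ep'}M\eta_n).
\]
Iterating with monotonicity then gives
\[
I_h^k\zeta\le \zeta+kh\,M(\ep'+C_{\ep'}\eta_n)\le \zeta+TM(\ep'+C_{\ep'}\eta_n).
\]
On $K\subset\{|x|\le n\}$ this is at most $TM(\ep'+C_{\ep'}\eta_n)$, which is small once $\ep'$ is chosen first and then $n$ large. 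This is the main obstacle: the hypothesis bounds $I_h$ only on $\Ccinf$ functions, while $\zeta$ is not compactly supported. I would handle this by working with $M\chi_n$ and using convexity, via $I_h(-u)\ge 2I_h0-I_hu$ together with $I_h0=0$ from the hypothesis applied to $f=0$. Alternatively, I would use the given bound on $\|I_h(M\chi_n)-M\chi_n\|_\infty$ directly, since then $I_h^k(M\chi_n)\ge M\chi_n-TM(\ep'+C\eta_n)$ follows by iteration; a lower bound for compact bumps is exactly what is needed below.

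Third, I combine the two steps. Let $K'=\{|x|\le n\}$ (enlarged as needed) and $\delta:=\|f-g\|_{\infty,K'}$. Write
\[
u:=\tfrac{f-(1-\lambda)g}{\lambda}=g+\tfrac{f-g}{\lambda}.
\]
Then
\[
u\le g+\tfrac{\delta}{\lambda}+\tfrac{2r}{\lambda}(1-\chi_n).
\]
Apply convexity once more to separate the last summand with weight $\lambda$:
\[
I_h^ku\le (1-\lambda)I_h^k\Big(\tfrac{g+\delta/\lambda}{1-\lambda}\Big)+\lambda I_h^k\Big(\tfrac{2r}{\lambda^2}(1-\chi_n)\Big).
\]
Both terms are controlled: the first by the sup-norm Lipschitz property, giving roughly $I_h^kg+c(\lambda)(\delta+r\lambda)$, and the second by the localization of the second step with $M=2r/\lambda^2$. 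Choosing $\lambda$ with $Cr\lambda\le\ep/3$, then $\ep'$ and $n$, yields
\[
I_h^kf-I_h^kg\le c\,\delta+\ep \quad\text{on } K
\]
with $c\sim 1/\lambda$. Symmetry gives the absolute value bound, uniformly in $h\in(0,h_0)$ and $kh\le T$.
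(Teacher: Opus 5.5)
Your argument is correct once you state the extra structure you use, and it proves by hand what the paper cites. That structure is translation invariance $I_h(f+a)=I_hf+a$ for constants $a\in\R$, together with $I_h0=0$; the latter does follow from \eqref{eq.estimate.comapctness} with $f=0$.

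\textbf{The hypothesis you borrow.} Translation invariance is what gives you the sup-norm contraction, the bound $\|I_h^kg\|_\infty\le r$, the reduction $I_h\big(M(1-\chi_n)\big)=M+I_h(-M\chi_n)$, and the iteration $I_h^k\zeta\le\zeta+khM(\ep'+C_{\ep'}\eta_n)$. It is not among the lemma's hypotheses. It does hold where the lemma is applied (property (ii) in the proof of Theorem~\ref{thm.chernoff.section5}), and the paper's two-line proof leaves it to the setting of the cited lemma.

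Some such assumption cannot be dropped. Take $I_hf:=f+h\big(\limsup_{|x|\to\infty}f(x)\big)^+$. It is convex and monotone, satisfies $I_h0=0$ and $\|I_hf-I_hg\|_\infty\le e^h\|f-g\|_\infty$, and is the identity on $\Ccinf(\R^d)$, so \eqref{eq.estimate.comapctness} holds trivially. Now let $K'$ be arbitrary and take $\chi\in\Ccinf(\R^d)$ with $0\le\chi\le1$ and $\chi=1$ on $K\cup K'$. Then $\|1-\chi\|_{\infty,K'}=0$, while $I_h^k1-I_h^k\chi=(1+h)^k-1\ge1$ on $K$ whenever $kh=1$. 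So the conclusion fails for $r=T=1$ and $\ep<1$.

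\textbf{Your ``main obstacle'' paragraph.} With translation invariance the obstacle disappears: \eqref{eq.estimate.comapctness} is two-sided and applies directly to $-M\chi_n\in\Ccinf(\R^d)$. The detour via $I_h(-u)\ge 2I_h0-I_hu$ points the wrong way, since it only yields lower bounds. Likewise, a lower bound on $I_h^k(M\chi_n)$ is not what your third step uses; it uses the upper bound on $I_h^k\big(M(1-\chi_n)\big)$. You can drop both.

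\textbf{Comparison with the paper.} The two proofs share the first step and differ in what is proved versus cited. The paper picks one cutoff $\chi$ with $\chi=1$ on $K$ and $C_\ep\big((1+|x|)|\nabla\chi|+|\nabla^2\chi|\big)\le\ep$. It deduces $\|I_h(r\chi)-r\chi\|_\infty\le 2r\ep h$, takes $K'=\operatorname{supp}\chi$, and hands the rest to \cite[Lemma 2.16]{BlessingKupperNendel2025}: the iteration over $k\le T/h$ and the splitting of $f-g$ into a local part and a tail.

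You carry out that delegated part yourself. You iterate the bump estimate and bound $f-g\le\delta+2r(1-\chi_n)$. This bound needs $K'\supseteq\operatorname{supp}\chi_n$, as in the paper, not merely $\{|x|\le n\}$, because $1-\chi_n$ is small just outside $\{|x|\le n\}$. You then split off the tail by convexity. This is longer but self-contained, and it shows exactly which structure is needed.

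It also shows that $c=1$ works. Under translation invariance your outer $\lambda$-splitting is superfluous: $I_h^kf\le I_h^k\big(g+2r(1-\chi_n)\big)+\delta$, and one convexity split gives $I_h^kf-I_h^kg\le\delta+2\lambda r+2rT(\ep'+C_{\ep'}\eta_n)$ on $K$. Even in your version, the factor $1/\lambda$ in front of $\delta$ is cancelled by the outer $\lambda$, so your $c\sim 1/\lambda$ is an overestimate.
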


\begin{proof}
Let $\ep>0$, $r\geq0$ and $K\Subset \R^d$.\ 
 By \cite[Lemma A.1]{nendel2025chernoff} with $\kappa(s)=1+s$ for all $s\geq 0$, there exists a function $\chi\in \Ccinf(\R^d)$ with $0\leq \chi\leq 1$, $\chi(x)=1$ for all $x\in K$ and 
 \begin{equation}\label{eq.bound.chi}
  \sup_{x\in \R^d}C_\ep\big((1+|x|)\big|\nabla \chi(x)\big|+\big|\nabla^2 \chi(x)\big|\big)\leq \ep.
 \end{equation}
 Then, by assumption, for all $h\in (0,h_0)$,
 \[
 \frac{\|I_h(r\chi)-(r\chi)\|_\infty}{h} \leq 2 r\ep.
 \]
 Choosing $K'\Subset \R^d$ as the support of $\chi$, the assumptions of \cite[Lemma 2.16]{BlessingKupperNendel2025} are satisfied with $\zeta_x:=\chi$ for all $x\in K$. The statement thus follows from \cite[Lemma 2.16]{BlessingKupperNendel2025}. 
\end{proof}

\end{document}